\title{Verification of Linear Dynamical Systems via O-Minimality of the Real Numbers} 
\author {Toghrul Karimov} 
{Max Planck Institute for Software Systems, Saarland Informatics Campus, Saarbr\"ucken, Germany}
{toghs@mpi-sws.org}
{https://orcid.org/0000-0002-9405-2332}
{}
\authorrunning{T. Karimov} 
\keywords{Linear dynamical systems, reachability problems, o-minimality} 
\newcommand{\torus}{\mathbb{T}}
\newcommand{\nat}{\mathbb{N}}
\newcommand{\intg}{\mathbb{Z}}
\newcommand{\rel}{\mathbb{R}}
\newcommand{\rat}{\mathbb{Q}}
\newcommand{\com}{\mathbb{C}}
\newcommand{\alg}{\overline{\rat}}
\newcommand{\ralg}{\rel \cap \alg}
\newcommand{\rexp}{\rel_{\exp}}
\newcommand{\diag}{\operatorname{diag}}
\newcommand{\Ccal}{\mathcal{C}}
\newcommand{\Dcal}{\mathcal{D}}
\newcommand{\Fcal}{\mathcal{F}}
\newcommand{\Ical}{\mathcal{I}}
\newcommand{\Jcal}{\mathcal{J}}
\newcommand{\Lcal}{\mathcal{L}}
\newcommand{\Mcal}{\mathcal{M}}
\newcommand{\Tcal}{\mathcal{T}}
\newcommand{\Mb}{\mathbb{M}}
\newcommand{\Sb}{\mathbb{S}}
\newcommand{\seq}[1]{(#1)_{n \in \mathbb{N}}}
\newcommand{\zerovec}{\mathbf{0}}
\newcommand{\im}{\bm{i}}
\begin{document}

\maketitle

\begin{abstract}
	A discrete-time linear dynamical system (LDS) is given by an update matrix $M \in \rel^{d\times d}$, and has the trajectories $\langle s, Ms, M^2s, \ldots \rangle$ for $s \in \rel^d$.
	Reachability-type decision problems of linear dynamical systems, most notably the Skolem Problem, lie at the forefront of decidability: typically, sound and complete algorithms are known only in low dimensions, and these rely on sophisticated tools from number theory and Diophantine approximation.
	Recently, however, o-minimality has emerged as a counterpoint to these number-theoretic tools that allows us to decide certain modifications of the classical problems of LDS without any dimension restrictions.
	In this paper, we first introduce the Decomposition Method, a framework that captures all applications of o-minimality to decision problems of LDS that are currently known to us.
	We then use the Decomposition Method to show decidability of the Robust Safety Problem (restricted to bounded initial sets) in arbitrary dimension:
	given a matrix~$M$, a bounded semialgebraic set~$S$ of initial points, and a semialgebraic set~$T$ of unsafe points, it is decidable whether there exists $\varepsilon > 0$ such that all orbits that begin in the $\varepsilon$-ball around $S$ avoid $T$.
\end{abstract}

\section{Introduction}

Linear dynamical systems (LDS) are mathematical models widely used in engineering and sciences to describe systems that evolve over time.
A \emph{discrete-time} LDS is given by an update matrix $M$.
The \emph{orbit} of a point $s$ under $M$ is the infinite sequence of vectors $(M^n s)_{n\in\nat}$.
In formal verification, the most prominent decision problem of linear dynamical systems is the (point-to-set) Reachability Problem: given $M \in\rat^{d\times d}$, $s \in \rat^d$, and a \emph{semialgebraic} target set~$T$, decide whether there exists $n \in \nat$ such that $M^ns \in T$.
In terms of program verification, this is equivalent to the \emph{Termination Problem} for linear loops: given a program fragment of the form
\begin{align*}
	&x_1,\ldots,x_d \coloneqq s_1,\ldots,s_d\\
	&\mathbf{while} \:\: \lnot \mathrm{P}(x) \:\: \mathbf{do} \:\: x \coloneqq M \cdot x
\end{align*}
where $x_1,\ldots,x_d$ are variable names, $s_i$ is the initial value of $x_i$, $x = (x_1,\ldots,x_d)$, $M$ is a linear update and $P$ is a Boolean combination of polynomial inequalities, decide whether the loop terminates.
The restriction of the Termination Problem to $P$ defined by a single linear equality (i.e., the Reachability Problem where $T$ is restricted to hyperplanes) is Turing-equivalent to the famously open \emph{Skolem Problem} of linear recurrence sequences (LRS): given an LRS $(u_n)_{n\in\nat}$ defined  by a recurrence relation
\[
u_{n+d} = a_1u_{n+d-1} + \cdots + a_d u_n
\]
as well as the initial values $u_0,\ldots,u_{d-1}$, decide whether there exists $n$ such that $u_n = 0$.
The Skolem Problem is known to be decidable\footnote{Decidability for $d \le 4$ applies to LRS over real algebraic numbers. For LRS over algebraic numbers, decidability is known for $d \le 3$.} only for LRS with $d \le 4$ \cite{mignotte-shorey-tijdeman-skolem}.
Consequently, the Termination Problem for linear loops with a linear guard is currently open for $5$ or more program variables.
Similarly, the Termination Problem with $P$ defined by a single linear inequality (i.e., the Reachability Problem with halfspace targets) is Turing-equivalent to the \emph{Positivity Problem}: given an LRS $(u_n)_{n\in\nat}$, decide whether $u_n \ge 0$.
The Positivity Problem subsumes the Skolem Problem \cite[Chap.~2.5]{karimov-thesis} and is decidable for LRS with $d \le 5$, but is also known to be hard with respect to certain long-standing open problems in Diophantine approximation \cite{joel-pos-low-dim}: a resolution (in either direction) of the decidability of the Positivity Problem would entail major breakthroughs regarding approximabiliity of \emph{Lagrange constants} for a large class of transcendental numbers, which are currently believed to be out of reach.

Decidability of the Reachability Problem, which is formidably difficult in full generality, has also been studied under various geometric restrictions on $T$.
The state of the art in this direction is that the Reachability Problem is decidable in arbitrary dimension for $T$ restricted to the class of \emph{tame} targets \cite{karimovPOPL2}, i.e.\ for $T$ that can be constructed through the usual set operations from semialgebraic sets that either (i) have dimension one (i.e.\ are ``string-like''), or (ii) are contained in a three-dimensional subspace of $\rel^d$.
In particular, the Reachability Problem is decidable for arbitrary $T$ in dimension $d\le 3$.
These decidability results rely on Baker's theorem \cite{baker-rational-sharp-version-1993} as well its $p$-adic analogue \cite{yu-padic-baker}, and are tight in the sense that allowing $T$ to be two-dimensional or contained in a four-dimensional subspace yields a decision problem that is Diophantine-hard similarly to the Positivity Problem \cite[Chap.~8]{karimov-thesis}.

As discussed above, the Reachability Problem is intimately related to number theory, and, mathematically speaking, becomes harder and harder as we increase the dimension~$d$.
Recently, however, a string of results have emerged that show decidability of various reachability-type problems of LDS in arbitrary dimension.
First, Akshay et al. \cite{akshay2024robustness} showed that the robust variants of the Skolem Problem as well as the Positivity Problem are decidable: given a matrix $M$, an initial point $s$, and a hyperplane or a halfspace $T$, we can decide whether there exists $\varepsilon > 0$ such that for all $s'$ in the $\varepsilon$-ball around $s$, the orbit $\seq{M^ns'}$ avoids~$T$. 
The proof is based on classical analyses of LRS.
A second interesting result was given by Kelmendi in \cite{kelmendi2023computing}, who showed that given $M$, an initial point $s$, and a semialgebraic target~$T$, the \emph{frequency} of visits
\[
\mu = \lim_{n\to \infty} \frac 1 n \sum_{k=0}^{n-1} \mathbbm{1}(M^k s \in T)
\]
exists and can be effectively compared against 0.
The proof is variation on the fundamental result~\cite{joel-ult-pos-simple} that \emph{ultimate positivity} is decidable for \emph{simple} (also known as \emph{diagonalisable}) linear recurrence sequences.
Lastly, \cite{d2022pseudo} showed that the Pseudo-Orbit Reachability Problem is decidable for diagonalisable~$M$ and a single starting point $s$: given such $M,s$ and a semialgebraic target~$T$, we can decide whether for every $\varepsilon>0$ there exists $(x_n)_{n\in\nat}$ such that (i) $x_0 = s$, (ii) $\Vert x_{n+1} - Mx_n\Vert < \varepsilon$ for all $n$, and (iii) $x_n \in T$ for some $T$.
Such $(x_n)_{n\in\nat}$ is called an \emph{$\varepsilon$-pseudo-orbit} of~$s$ under $M$.
The proof of decidability uses geometry of pseudo-orbits of diagonalisable $M$ and the classical tools of linear dynamical systems in combination with \emph{o-minimality}.
Briefly, o-minimality of real numbers equipped with arithmetic and exponentiation tell us that every subset of $\rel^d$ definable using first-order logic and the aforementioned operations has finitely many connected components.

The thesis of this paper is that every single variant of the Reachability Problem that is decidable in an arbitrary dimension $d$ can, in fact, be explained using o-minimality (\Cref{sec::other-applications}).
Although a concept originating in model theory, a branch of mathematical logic, o-minimality has recently had spectacular applications to Diophantine geometry, including counting rational points in algebraic varieties as well as the proofs of Mordell-Lang and André-Oort conjectures~\cite{pila2014minimality}.
To apply o-minimality, we introduce the Decomposition Method\footnote{A similar decomposition is used in \cite{almagor2020invariants}.}, a framework in which all decidability results mentioned above (and more) 
can be proven.
As an application of the Decomposition Method, we study the Robust Safety Problem, which is motivated by situations in which a system must remain in a safe set forever, even when an adversarial perturbation is applied to the initial state.
For effectiveness reasons, let us work with real algebraic numbers\footnote{If we move from the reals to the complex numbers, i.e.\ consider the Robust Safety Problem in $\alg^d$, all of our results still apply. In this setting semialgebraic sets are defined by identifying $\com^d$ with $\rel^{2d}$.}, denoted by $\ralg$, which subsume rationals and can be effectively represented and manipulated in computer memory.
For $s \in \rel^d$ and $\varepsilon > 0$, denote by $B(s, \varepsilon)$ the open $\varepsilon$-ball around~$s$.
Further write  $B(S,\varepsilon)$ for $\bigcup_{s \in S} B(s,\varepsilon)$.
The Robust Safety Problem is to decide, given $M \in (\ralg)^{d\times d}$, a semialgebraic set $S$ of initial points, and a semialgebraic set $T$ of unsafe points, whether the there exists $\varepsilon > 0$ such that the sequence $(M^n \cdot B(S,\varepsilon))_{n\in\nat}$ \emph{avoids}~$T$: that is, whether $M^n \cdot B(s,\varepsilon)$ does not intersect~$T$ for all $s \in S$ and $n \in \nat$.
Our main result is the following, which, for bounded~$S$, (i) characterises the largest $\varepsilon > 0$ such that $(M^n \cdot B(S,\varepsilon))_{n\in\nat}$ avoids~$T$, and (ii)~shows decidability of the Robust Safety Problem.

\begin{theorem}
	\label{thm:main}
	Let $M \in (\ralg)^{d\times d}$, $S \subseteq \rel^d$ be non-empty, semialgebraic and bounded, and $T \subseteq \rel^d$ be semialgebraic.
	Further let
	\begin{align*}
		\mu_1 &= \sup\, \{\varepsilon \ge 0 \colon M^n\cdot B(S, \varepsilon) \textrm{ does not intersect $T$ for all $n$}\},\\
		\mu_2 &= \sup\, \{\varepsilon \ge 0 \colon M^n\cdot B(S, \varepsilon) \textrm{ does not intersect $T$ for all sufficiently large $n$}\}.
	\end{align*}
	Then $\mu_1 \in \ralg$, $\mu_2 \in (\ralg) \cup \{\infty\}$ and is effectively computable, $\mu_1$ can be approximated (both from above and below) to arbitrary precision, and it is decidable whether $\mu_1 = 0$.
	Moreover, for every $\varepsilon \in (0, \mu_2) \cap \rat$ we can effectively compute~$N$ such that $(M^n \cdot B(S,\varepsilon))_{n\ge N}$ avoids $T$.
\end{theorem}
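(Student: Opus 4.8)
The plan is to reduce the statement, through the Decomposition Method, to the analysis of a single $\rexpbt$-definable function, and then to read off every assertion from o-minimality of $\rexpbt$ together with the equidistribution of the ``rotational part'' of the orbit. If $T=\emptyset$ then $\mu_1=\mu_2=\infty$, so assume $T\neq\emptyset$; after checking finitely many initial steps and passing to the invariant subspace $\mathrm{im}(M^d)$, on which $M$ acts invertibly, we may assume $M$ invertible. Putting $M$ in real Jordan form, the entries of $M^n$ are $\ralg$-linear combinations of the real and imaginary parts of the complex numbers $n^j\rho_k^n\,\mathbf{z}_{k,n}$, where $\rho_k=\lvert\lambda_k\rvert\in\ralg$ runs over the moduli of the eigenvalues $\lambda_1,\dots,\lambda_l$ of $M$, and $\mathbf{z}_{k,n}:=(\lambda_k/\rho_k)^n\in\torus$ (with $\lambda_k/\rho_k$ algebraic of modulus $1$). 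Grouping the ``growth data'' $\rho_k^n=e^{n\log\rho_k}$ together with the polynomial factors $n^j$, and collecting the ``rotational data'' into the point $\mathbf{z}_n:=(\mathbf{z}_{1,n},\dots,\mathbf{z}_{l,n})\in\torus^l$, one obtains an $\rexpbt$-definable map $\Phi\colon\rel_{\ge0}\times\torus^l\to\rel^{d\times d}$ — the exponential being applied only to the real variable, the torus coordinates entering polynomially — with $M^n=\Phi(n,\mathbf{z}_n)$ for all $n$. Since $S,T$ are semialgebraic (over $\ralg$), the function
\[
\Delta(\tau,\mathbf{z}):=\inf\{\|v\| : \exists s\in S,\ \Phi(\tau,\mathbf{z})(s+v)\in T\}
\]
is $\rexpbt$-definable, and $\Delta(n,\mathbf{z}_n)$ is exactly the largest $\varepsilon\ge0$ for which $M^n\cdot B(S,\varepsilon)$ does not intersect $T$; indeed $\Delta(n,\mathbf{z}_n)=\mathrm{dist}(S,M^{-n}T)$, a distance between $\ralg$-semialgebraic sets, hence a member of $\ralg$. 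Therefore $\mu_1=\inf_n\Delta(n,\mathbf{z}_n)$ and $\mu_2=\liminf_n\Delta(n,\mathbf{z}_n)$.

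Next I would invoke the structure of $\seq{\mathbf{z}_n}$: its closure $\Tcal:=\cl(\{\mathbf{z}_n:n\in\nat\})$ is a closed subgroup of $\torus^l$, hence a finite union of cosets of a connected subtorus; it is $\ralg$-semialgebraic (cut out by finitely many equations ``monomial $=1$'') and effectively computable, the latter reducing to the computation of the finitely generated group of multiplicative relations among the algebraic numbers $\lambda_k/\rho_k$. Moreover $\seq{\mathbf{z}_n}$ equidistributes within each coset along the appropriate arithmetic progression, and $\cl(\{\mathbf{z}_n:n\ge N\})=\Tcal$ for every $N$. Now comes the core. Since $\Delta$ is $\rexpbt$-definable it is tame: by the monotonicity and cell-decomposition theorems the lower limit $\Delta_\infty(\mathbf{z}^\ast)$ of $\Delta(\tau,\mathbf{z})$, taken as $(\tau,\mathbf{z})\to(\infty,\mathbf{z}^\ast)$ with $\mathbf{z}\in\Tcal$, exists in $[0,\infty]$ for each $\mathbf{z}^\ast\in\Tcal$, and $\Delta_\infty$ is a definable function on $\Tcal$. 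The key structural claim — this is where the Decomposition Method does the real work — is that $\Delta_\infty$ is in fact \emph{$\ralg$-semialgebraic}: as $\tau\to\infty$ the exponential factors $e^{\tau\log\rho_k}$ degenerate to $0$ or to $\infty$ according to the (algebraic) comparison of $\rho_k$ with $1$; the subspace spanned by the generalised eigenvectors for eigenvalues of modulus $>1$, together with the Jordan-defective directions of modulus $1$, grows without bound, so that whether $T$ is met in the limit depends only on the semialgebraic recession structure of $T$; the subspace of modulus $<1$ contracts to $0$; and on the remaining semisimple modulus-one part the motion is the semialgebraic rotation recorded by $\mathbf{z}$, acting on the semialgebraic set $\overline{S}$. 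All told, $\Delta_\infty(\mathbf{z}^\ast)$ is given by an $\ralg$-semialgebraic expression in $\mathbf{z}^\ast$. Combining this with equidistribution yields
\[
\mu_2=\liminf_n\Delta(n,\mathbf{z}_n)=\inf\{\Delta_\infty(\mathbf{z}^\ast):\mathbf{z}^\ast\in\Tcal\},
\]
an infimum of an $\ralg$-semialgebraic function over an $\ralg$-semialgebraic set, bounded below by $0$; hence $\mu_2\in\ralg\cup\{\infty\}$ and is effectively computable by real quantifier elimination (applied to the computed descriptions of $\Delta_\infty$ and $\Tcal$, so that decidability of the full theory of $\rexpbt$ is not invoked).

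It remains to treat $\mu_1$ and the effectivity claims. Recall $\mu_1=\inf_n\Delta(n,\mathbf{z}_n)$. If this equals $\mu_2$, then $\mu_1\in\ralg$ by the previous paragraph; otherwise $\inf_n\Delta(n,\mathbf{z}_n)<\liminf_n\Delta(n,\mathbf{z}_n)=\mu_2$, so the infimum is attained at some $n$ and $\mu_1=\Delta(n,\mathbf{z}_n)\in\ralg$ — either way $\mu_1\in\ralg$. For effectivity, note that for every $\eta$ with $0\le\eta<\mu_2$ the set $\{\tau\ge0:\exists\mathbf{z}\in\Tcal,\ \Delta(\tau,\mathbf{z})\le\eta\}$ is definable and \emph{bounded} (an unbounded branch would, by definable choice, produce $\mathbf{z}^\ast\in\Tcal$ with $\Delta_\infty(\mathbf{z}^\ast)\le\eta<\mu_2=\inf_{\Tcal}\Delta_\infty$, a contradiction), and a bound $N(\eta)$ is obtained effectively — the decisive estimate being carried out semialgebraically through $\Delta_\infty$. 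Consequently: with $\eta=\mu_2/2$, testing the decidable $\ralg$-semialgebraic conditions $M^n\overline{S}\cap\overline{T}\neq\emptyset$ for $n<N(\mu_2/2)$ decides whether $\mu_1=0$ (if $\mu_2=0$ the answer is immediate, since $0\le\mu_1\le\mu_2$); with $\eta=\mu_2-\delta$, the value $\mu_1$ is confined to the interval between $\min(\mu_2-\delta,\min_{n<N(\eta)}\Delta(n,\mathbf{z}_n))$ and $\min(\mu_2,\min_{n<N(\eta)}\Delta(n,\mathbf{z}_n))$, of width at most $\delta$ and with $\ralg$-computable endpoints, which yields the two-sided approximation; and the same boundedness applied with $\eta=\varepsilon$ gives, for rational $\varepsilon\in(0,\mu_2)$, an effectively computable $N$ with $\Delta(n,\mathbf{z}_n)>\varepsilon$ for all $n\ge N$, that is, with $(M^n\cdot B(S,\varepsilon))_{n\ge N}$ avoiding $T$.

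The step I expect to be the main obstacle is the structural claim of the second paragraph — that $\Delta_\infty$ is $\ralg$-semialgebraic and that $\mu_2$ equals its infimum over $\Tcal$, i.e.\ that the robust-safety question ``stabilises'' as $n\to\infty$ into a semialgebraic limit problem. This is exactly where o-minimality is indispensable (it tames the genuinely transcendental $\Delta$ enough to guarantee that these limits exist and are uniform), where the Decomposition Method must correctly account for Jordan blocks, clusters of eigenvalues of equal modulus, and the recession cone of $T$, and where one must justify the interchange of the discrete $\liminf$ along $\seq{\mathbf{z}_n}$ with the continuous limit over $\Tcal$ using equidistribution. Once these are in place, everything else is either routine $\ralg$-semialgebraic computation or an immediate consequence of o-minimal boundedness.
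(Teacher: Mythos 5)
Your proposal is, at its core, the same Decomposition Method that the paper uses: you separate the dynamics into a ``growth'' part (the moduli $\rho_k^n$, tame thanks to o-minimality) and a ``rotational'' part (the torus point $\mathbf{z}_n$, uniformly recurrent by Kronecker/Masser), and you compute $\mu_2$ by letting the growth part go to its limit. The bookkeeping differs: the paper works with two \emph{sequences of sets}, $(D^n \cdot B(S,\varepsilon))$ and $(C^{-n}\cdot T)$, studies the closure $\Dcal$ of $\{D^n\}$ and the Kuratowski limit shape $L$ of $(C^{-n}\cdot T)$, and then shows $\mu_2 = \sup\{\varepsilon : \Dcal\cdot B(S,\varepsilon)\text{ misses }L\}$; you instead package everything into a single scalar-valued function $\Delta(\tau,\mathbf z)$ and study its limit $\Delta_\infty$ on the torus $\Tcal$. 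These are equivalent viewpoints and neither buys anything essential over the other. Two remarks on the details, though.

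First, the step you flag as ``the main obstacle'' is indeed the crux, and your proposal does not close it: you assert that $\Delta_\infty$ is $\ralg$-semialgebraic and effectively computable, justified informally by ``the exponential factors degenerate'' and ``the recession structure of $T$.'' This is exactly the content of the paper's Lemma~\ref{thm:omin-effective}: one quantifier-eliminates the $\Lcal_{or}$-formula down to a Boolean combination of polynomial sign conditions in the auxiliary variables $(n, \rho_1^n,\dots,\rho_m^n)$, and then the eventual sign of each polynomial $\sum_l h_l(x) q_l(n) R_l^n$ is read off from the signs of the dominant coefficients $h_l(x)$, which are semialgebraic conditions on $x$; the same manipulation gives the explicit bound $N$ in part~(c), which is what you need for all the effectivity claims (deciding $\mu_1 = 0$, the two-sided approximation of $\mu_1$, and the computable $N$ for rational $\varepsilon < \mu_2$). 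Your $\Delta$, with the torus coordinates passed as extra semialgebraic parameters and the moduli substituted as $\rho_k^\tau$, does have the right shape for this lemma to apply, so the argument can be completed, but as written it is a gap rather than a proof. You would also need to make precise the interchange $\liminf_n \Delta(n,\mathbf z_n) = \inf_{\Tcal}\Delta_\infty$; the paper gets the corresponding uniformity from compactness of $\Dcal\cdot \cl(B(S,\varepsilon))$ via Lemma~\ref{thm:limshape-properties}(c), and a compactness-of-$\Tcal$ argument would play the analogous role for you.

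Second, your opening reduction to invertible $M$ by ``passing to $\operatorname{im}(M^d)$'' is not innocuous: $M^d\cdot B(S,\varepsilon)$ is a possibly degenerate ellipsoid, not a Euclidean ball around $M^d\cdot S$, so the restricted problem is not literally an instance of the theorem you are trying to prove. The paper sidesteps this entirely by allowing the scaling matrix $C$ to have zero eigenvalues and defining $C^{-n}\cdot T = \{y : C^n y\in T\}$ for non-invertible $C$, so no reduction is needed. Your approach would work if you generalise $\Delta$ to arbitrary semialgebraic perturbation shapes, but that should be said, not glossed over.
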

Intuitively, $\mu_1, \mu_2$  are the best possible margin of safety and ``asymptotic safety'', respectively.
From \Cref{thm:main} it follows that the Robust Safety Problem is decidable: we simply check whether $\mu_1 = 0$.
For positive instances, i.e.\ when $\mu_1 > 0$, we can compute~$\varepsilon \ge 0$ that is arbitrarily close to the best possible safety margin by approximating $\mu_1$ from below.
Finally, for every positive $\varepsilon \ne \mu_2$ we can decide whether $(M^n \cdot B(S,\varepsilon))_{n\in\nat}$ avoids $T$:
for $\varepsilon > \mu_2$, the answer is immediately negative.
For $\varepsilon \in (0,\mu_2)$, we can compute $N$ and then check whether $M^n \cdot B(S,\varepsilon)$ intersects $T$ for some $n < N$.
For $\varepsilon = \mu_2$, however, we do not know how to decide whether $(M^n \cdot B(S,\varepsilon))_{n\in\nat}$ avoids $T$: in this case we have to contend with the hard Diophantine approximation problems that also arise in the classical Reachability Problem.

\paragraph*{Related work}

At the time of writing, to the best of our knowledge, there are only two published works that apply o-minimality to verification of (discrete-time) linear dynamical systems: \cite{almagor2022minimal} studies o-minimal and semialgebraic \emph{invariants} of LDS (see \Cref{sec::other-applications}), and \cite{d2022pseudo} studies the Pseudo-Orbit Reachability Problem for LDS.
In the unpublished extension \cite{d2022pseudo-arxiv} of \cite{d2022pseudo}, it is described how the decidability of the Pseudo-Orbit Reachability Problem can be adapted to study the Robust Safety Problem for singleton $S$.
The latter problem, however, turns out to be much simpler, and can be solved more generally and directly using the Decomposition Method.

\Cref{thm:main} is a generalisation of the aforementioned results of Akshay et al. \cite{akshay2024robustness} to bounded~$S$ and semialgebraic $T$.
A similar result is \cite{d2021pseudo}, which shows the Pseudo-Orbit Reachability Problem is decidable for hyperplane/halfspace targets, without any restrictions on $M$.
Both of these heavily rely on the fact that $T$ is defined by a single linear (in)equality, and their approach does not generalise to targets defined by non-linear or multiple inequalities.

Some other connections between o-minimality and dynamical (or, more generally, cyber-physical) systems have already been established.
In \cite{lafferriere2000minimal}, Lafferriere et al.\ show that \emph{o-minimal hybrid systems} always admit a finite bisimulation; these do not include linear dynamical systems.
In \cite{miller2011expansions}, Miller studies expansions of the field of real numbers with trajectories of LDS from the perspective of definability and o-minimality.

\section{Preliminaries}

\subsection{Notation and conventions}

We denote by $\im$ the imaginary number and by $\torus$ the unit circle in $\com$.
We write $\zerovec$ for a vector of all zeros as well as a zero matrix.
In both cases, the dimensions will be clear from the context.
For $x\in\rel^d$ and $\varepsilon > 0$, we define $B(x,\varepsilon) = \{y\in\rel^d\colon \Vert x-y\Vert < \varepsilon\}$
where $\Vert\cdot\Vert$ is the $\ell_2$-norm.
We work exclusively with open $\ell_2$-balls.
We abbreviate $B(\zerovec,\varepsilon)$ to $B(\varepsilon)$.
For $x\in \rel^d$ and $Y\subseteq \rel^d$, we define $d(x,Y) = \inf \{\Vert y - x \Vert \colon y \in Y\} \in \rel \cup \{\infty\}$.
For a set $Y$ and $\varepsilon > 0$, we let $B(Y, \varepsilon) = \{x \colon d(x,Y) < \varepsilon\}$.

\label{weird-def}
For sets of vectors $X, Y$, we write $X+Y$ for $\{x + y \colon x \in X, y \in Y\}$.
For a matrix $M$ and a set $X$ of vectors, we write $M \cdot X$ to mean $\{Mx\colon x \in X\}$.
Finally, for a set $\Mcal$ of matrices and a set $X$ of vectors we define $\Mcal \cdot X = \{M x \colon M \in \Mcal, x \in X\}$.
For a (not necessarily invertible) matrix $C \in \rel^{d\times d}$ and $X \subseteq \rel^d$, we define  $C^{-1} \cdot X$ to be $\{y \in \rel^d \colon C y \in X\}$.
We write $C^{-n}$ for $(C^n)^{-1}$.

We denote by $\operatorname{Cl}(X)$ the closure of $X$ in a topological space that will be either explicit or clear from the context. 
We will only be working with Euclidean topology and induced subset topologies.
When we say that an object $X$ is effectively computable, we mean that a representation of $X$ in a scheme that will be clear from the context is effectively computable.

\subsection{Linear algebra}

For a matrix $A \in \rel^{d\times d}$, $\Vert A \Vert = \max_{x \ne \zerovec} \Vert A x\Vert / \Vert x \Vert$.
The matrix norm is sub-multiplicative: for all $A,B$ of matching dimensions, $\Vert A B \Vert \le \Vert A \Vert \cdot \Vert B \Vert$.

Let $X_i \in \rel^{d_i \times d_i}$ for $1\le i \le k$.
We write $\diag(X_1,\ldots,X_k)$ for the block-diagonal matrix in $\rel^{d\times d}$, where $d = d_1 + \cdots + d_k$, constructed from $X_1,\ldots,X_k$ respecting the order.
For $a,b \in \rel$, let $\Lambda(a,b) =  \begin{bmatrix}
	a & -b\\
	b & a
\end{bmatrix}$, and for $\theta \in \rel$, let $R(\theta) = \Lambda(\sin(\theta), \cos(\theta))$.
A \emph{real Jordan block} is a matrix 
\begin{equation}
	\label{eq:prelims-jordan-block}
	J = 
	\begin{bmatrix}
		\Lambda & I \\
		&\Lambda & \ddots\\
		&&\ddots & I\\
		&&&\Lambda
	\end{bmatrix} \in \rel^{d \times d}
\end{equation}
where $I$ is an identity matrix, and either (i) $\Lambda, I \in \rel^{1\times 1}$, or (ii) $\Lambda = \rho R(\theta)$ with $I \in \rel^{2\times 2}$, $\rho \in \rel_{> 0}$ and $\theta \in \rel$.
A matrix $J$ is in \emph{real Jordan form} if $J = \diag(J_1,\ldots,J_l)$ where each~$J_i$ is a real Jordan block. 
Given $M \in (\ralg)^{d\times d}$, we can compute $P, J \in (\ralg)^{d\times d}$ such that $J$ is in real Jordan form and $M = P^{-1}JP$ \cite{cai1994computing}.

\subsection{Logical theories}

A \emph{structure} $\mathbb{M}$ consists of a universe $U$, constants $c_1,\ldots,c_k \in U$, predicates $P_1,\ldots,P_l$ where each $P_i \subseteq U^{\mu(i)}$ for some $\mu(i) \ge 1$, and functions $f_1,\ldots,f_m$ where each $f_i$ has the type $f_i \colon U^{\delta(i)} \to U$ for some $\delta(i) \ge 1$.
By the \emph{language} of the structure $\mathbb{M}$, written  $\Lcal_{\Mb}$, we mean the set of all well-formed first-order formulas constructed from symbols denoting  the constants $c_1,\ldots, c_k$, predicates $P_1,\ldots,P_l$, functions $f_1,\ldots, f_m$, as well as the equality symbol $=$, the quantifier symbols $\forall, \exists$ and the connectives $\land, \lor, \lnot$.
A \emph{theory} is simply a set of sentences, i.e.\ formulas without free variables.
The theory of the structure $\mathbb{M}$, written $\operatorname{Th}(\Mb)$, is the set of all sentences in the language of $\mathbb{M}$ that are true in $\mathbb{M}$.
A theory $\Tcal$
\begin{itemize}
	\item is \emph{decidable} if there exists an algorithm that takes a sentence $\varphi$ and decides whether $\varphi \in \Tcal$, and
	\item admits \emph{quantifier elimination} if for every formula $\varphi$ with free variables $x_1,\ldots,x_n$, there exists a quantifier-free formula $\psi$ with the same free variables such that the sentence
	\[
	\forall x_1,\ldots,x_n\colon (\varphi(x_1,\ldots,x_n) \Leftrightarrow \psi(x_1,\ldots,x_n))
	\]
	belongs to $\Tcal$.
\end{itemize}

We will be working with the following structures and their theories.
\begin{itemize}
	\item Let $\rel_0  = \langle \rel; 0,1,<,+,\cdot \rangle$, which is the ordered ring of real numbers.
	We will denote the language of this structure by $\Lcal_{or}$, called the \emph{language of ordered rings}.
	Observe that using the constants $0,1$ and the addition, we can obtain any constant $c \in \nat$.
	Hence every atomic formula in $\Lcal_{or}$ with $k$ free variables is equivalent to $p(x_1,\ldots,x_k) \sim 0$, where $p$ is a polynomial with integer coefficients and $\sim$ is either $>$ or the equality.
	By the Tarski-Seidenberg theorem, $\operatorname{Th}(\rel_0)$ admits quantifier elimination and is decidable \cite{bochnak2013real}.
	\item Let $\rexp = \langle \rel; 0,1,<,+,\cdot, \exp \rangle$, the ring of real numbers augmented with the function $x \mapsto e^x$.
	The theory of this structure is \emph{model-complete}, meaning that quantifiers in any formula can be eliminated down to a single block of existential quantifiers, and decidable assuming Schanuel's conjecture \cite{vdD1994bounded-analytic,macintyre1996decidability}.
\end{itemize}

We say that a structure $\Sb$ \emph{expands} $\Mb$ if $\Sb$ and $\Mb$ have the same universe and every constant, function, and relation of $\Mb$ is also present in~$\Sb$.
We will only need structures expanding $\rel_0$.
A set $X \subseteq U^d$ is \emph{definable} in a structure $\mathbb{M}$ if there exist $k \ge 0$, a formula $\varphi$ in the language of $\mathbb{M}$ with~$d+k$ free variables, and $a_1,\ldots,a_k \in U$ such that for all $x_1,\ldots,x_d \in U$, $\varphi(x_1,\ldots,x_d, a_1,\ldots,a_k)$ holds in $\Mb$ if and only if $(x_1,\ldots,x_d) \in X$.
We say that $X$ is \emph{definable in $\Mb$ without parameters} if we can take $k = 0$ above.
Similarly, a function is definable (without parameters) in $\Mb$ if its graph is definable (without parameters) in $\Mb$.

A structure $\Mb$ expanding $\rel_0$ is \emph{o-minimal} if every set definable in $\Mb$ has finitely many connected components.
The structures $\rel_0$ and $\rexp$,  as well as the expansion of $\rexp$ with bounded trigonometric functions \cite{vdD1994bounded-analytic}, are o-minimal.

\subsection{Semialgebraic sets and algebraic numbers}
\label{sec:semialgebraic-sets}
A set $X \subseteq \rel^d$ is \emph{semialgebraic} if it is definable in $\rel_0$ without parameters.
By quantifier elimination, every semialgebraic set can be defined by a Boolean combination of polynomial equalities and inequalities with integer coefficients.
We say that $Z \subseteq \com^d$ is semialgebraic if $\widetilde{Z} = \{(x_1,y_1,\ldots,x_d,y_d) \colon (x_1+\im y_1,\ldots,x_d + \im y_d) \in Z\}$ is a semialgebraic subset of $\rel^{2d}$.
We represent $Z$ by a formula defining $\tilde{Z}$.
A function $f \colon X \to Y$  is semialgebraic if its graph $\{(x, f(x)) \colon x \in X\} \subseteq X \times Y$ is semialgebraic.
We also identify $\rel^{a\times b}$ with $\rel^{ab}$, and define semialgebraic subsets of $\rel^{a\times b}$ accordingly.

A number $z \in \com$ is algebraic if there exists a polynomial $p \in \intg[x]$ such that $p(z) = 0$.
The set of all algebraic numbers is denoted by $\alg$.
Computationally, we will be working with \emph{real algebraic} numbers.
The number $x \in \ralg$ will be represented by a formula $\varphi \in \Lcal_{or}$ defining the singleton set $\{x\}$.
In this representation, arithmetic on real algebraic numbers is straightforward, and first-order properties of a given number can be verified using the decision procedure for $\operatorname{Th}(\rel_0)$.

\section{The Decomposition Method}

We say that a matrix $M$ is a \emph{scaling matrix} if all eigenvalues of $M$ belong to $\rel_{\ge 0}$, and a \emph{rotation matrix} if $M$ is diagonalisable and all eigenvalues of $M$ have modulus 1.
A decomposition of $M\in\rel^{d\times d}$ is a pair of matrices $(C,D)$ such that $C$ is a scaling matrix, $D$ is a rotation matrix, and $M = CD = DC$.

\begin{lemma}
	Every $M \in \rel^{d\times d}$ has a decomposition $(C,D)$.
	If $M \in (\ralg)^{d\times d}$, then a decomposition satisfying $C, D \in (\ralg)^{d\times d}$ can be effectively computed.
\end{lemma}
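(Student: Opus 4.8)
The plan is to pass to the real Jordan form of $M$ and build the decomposition one Jordan block at a time; morally, $C$ will realise the moduli $\rho$ of the eigenvalues and $D$ their arguments. Using \cite{cai1994computing} I would first write $M = P^{-1}JP$ with $J = \diag(J_1,\ldots,J_l)$ in real Jordan form, noting that $P$ and $J$ have entries in $\ralg$ whenever $M$ does. The key reduction is that it suffices to decompose each real Jordan block: if $J_i = C_i D_i = D_i C_i$ with $C_i$ a scaling matrix and $D_i$ a rotation matrix, then $C_J \coloneqq \diag(C_1,\ldots,C_l)$ is again a scaling matrix and $D_J \coloneqq \diag(D_1,\ldots,D_l)$ is again a rotation matrix --- a block-diagonal matrix assembled from diagonalisable blocks is diagonalisable, and the spectrum of the assembled matrix is the union of the spectra of the blocks --- and since block-diagonal matrices multiply block-wise we get $C_J D_J = D_J C_J = J$. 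Then $C \coloneqq P^{-1}C_JP$ and $D \coloneqq P^{-1}D_JP$ satisfy $M = CD = DC$, with $C$ a scaling matrix and $D$ a rotation matrix because similarity preserves both the set of eigenvalues and diagonalisability. Here it is essential that a scaling matrix need not be diagonalisable: this is exactly what lets $C_i$ absorb the nilpotent part of the block.

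For a real Jordan block $J_i$ of type (i), carrying a single real eigenvalue $\lambda$, I would take $D_i = I$, $C_i = J_i$ when $\lambda \ge 0$, and $D_i = -I$, $C_i = -J_i$ when $\lambda < 0$. In either case $D_i$ is diagonal with all eigenvalues of modulus $1$, hence a rotation matrix; $C_i$ is triangular with its single eigenvalue equal to $|\lambda| \ge 0$, hence a scaling matrix; $D_i$ is scalar, so it commutes with $C_i$; and $C_iD_i = J_i$.

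For a real Jordan block $J_i$ of type (ii), of size $2k$, with diagonal blocks $\Lambda = \rho R(\theta)$ (so $\rho > 0$) and $2\times 2$ identity blocks on the superdiagonal, I would take $D_i = \diag(R(\theta),\ldots,R(\theta))$ ($k$ copies) and $C_i = J_iD_i^{-1}$. Since $R(\theta) = \Lambda(\sin\theta,\cos\theta)$ is orthogonal and has eigenvalues $\sin\theta \pm \im\cos\theta$ of modulus $1$, the matrix $D_i$ is a rotation matrix. Then $C_iD_i = J_i$ holds by construction, and $D_iC_i = J_i$ holds because $D_i$, being block-diagonal with every diagonal block equal to $R(\theta)$, commutes with $J_i$ (as $R(\theta)$ commutes with both $\rho R(\theta)$ and $I$). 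A one-line block computation, $\rho R(\theta)\cdot R(\theta)^{-1} = \rho I$ on the diagonal blocks and $I \cdot R(\theta)^{-1} = R(\theta)^{-1}$ on the superdiagonal blocks, shows $C_i$ is block-upper-triangular with every diagonal block equal to $\rho I$, so its only eigenvalue is $\rho \ge 0$ and $C_i$ is a scaling matrix.

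For the effectiveness statement, when $M \in (\ralg)^{d\times d}$ all block data is read off from $J \in (\ralg)^{d\times d}$: for a type-(i) block the sign of $\lambda \in \ralg$ is decided with the decision procedure for $\operatorname{Th}(\rel_0)$; for a type-(ii) block the entries $\rho\sin\theta$ and $\rho\cos\theta$ of $\Lambda$ lie in $\ralg$, whence $\rho = \sqrt{(\rho\sin\theta)^2 + (\rho\cos\theta)^2} \in \ralg$ and therefore $\sin\theta,\cos\theta \in \ralg$, so $R(\theta)$ and $R(\theta)^{-1} = \Lambda(\sin\theta,-\cos\theta)$ have entries in $\ralg$. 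Everything else is field arithmetic over $\ralg$ together with inversion of $P$ and conjugation, all of which are effective. I expect the only real subtlety to be precisely this last point --- $\theta$ itself need not be algebraic even though $\sin\theta$ and $\cos\theta$ are --- along with keeping the $2\times 2$-block bookkeeping straight in the type-(ii) verification; there is no deeper obstacle.
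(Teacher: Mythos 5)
Your proof is correct and follows the same strategy as the paper: reduce to the real Jordan form via \cite{cai1994computing}, decompose each real Jordan block into a scaling part and a rotation part, and reassemble by block-diagonal concatenation and conjugation by $P$. In fact, your write-up is more careful than the paper's in two places. First, the paper's treatment of a type-(i) block says only ``take $D = I$ and $C = J_i$'', which fails when the real eigenvalue is negative (then $C = J_i$ has an eigenvalue outside $\rel_{\ge 0}$ and is not a scaling matrix); your case split $D_i = -I$, $C_i = -J_i$ for $\lambda < 0$ is exactly the needed fix, and you correctly observe that $-I$ is a rotation matrix. Second, for type-(ii) blocks your choice $D_i = \diag(R(\theta),\ldots,R(\theta))$, $C_i = J_i D_i^{-1}$ with the block-wise verification that $C_i$ is block-upper-triangular with diagonal blocks $\rho I$ is clean and checks commutativity correctly via $R(\theta)$ commuting with $\rho R(\theta)$ and $I$; the paper's formulas at this step contain apparent typos (e.g.\ $D_i = aI$ cannot have eigenvalues $e^{\pm\im\theta}$), but the intent is the same as yours. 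Your effectiveness argument is also right: $\rho$ is recovered as $\sqrt{(\rho\sin\theta)^2 + (\rho\cos\theta)^2} \in \ralg$, whence $\sin\theta,\cos\theta\in\ralg$, and everything else is $\ralg$-arithmetic plus one query to $\operatorname{Th}(\rel_0)$ for the sign of a real eigenvalue.
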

\begin{proof}
	Write $M = P^{-1}JP$, where $J$ is in real Jordan form.
	If $M \in (\ralg)^{d\times d}$, then we additionally ensure the same for $P$ and $J$.
	Once we construct a decomposition $(C_J, D_J)$ of~$J$, we have the decomposition $(P^{-1}C_JP, P^{-1}D_JP)$ of $M$.
	Suppose $J = \diag(J_1,\ldots,J_m)$, where each $J_i$ is a real Jordan block.
	It suffices to construct decompositions $(C_i,D_i)$ of $J_i$ for $1\le i \le m$: then $(\diag(C_1,\ldots,C_m), \diag(D_1,\ldots,D_m))$ is a decomposition of $J$.
	
	Suppose $J_i$ has a single real eigenvalue $\rho$.
	If $\rho \ge 0$, then we take $D = I$ and $C = J_i$; otherwise, we take $D = -I$ and $C = -J_i$.
	Now suppose $J$ is of the form \eqref{eq:prelims-jordan-block}, where $\Lambda = \rho R(\theta)$  for some $\rho> 0$ and $I$ is the $2\times 2$ identity matrix.
	Note that the entries of $R(\theta)$ as well as $\rho$ are real algebraic in case $M \in (\ralg)^{d\times d}$.
	Let $N$ be the nilpotent matrix satisfying $J_i = \diag(\Lambda,\ldots,\Lambda) + N$.
	We compute the decomposition $(C_i, D_i)$ with $C_i = \rho I + N$ and $D_i = \diag(R(\theta),\ldots,R(\theta))$.
	The only eigenvalue of $C_i$ is $\rho$, and the eigenvalues of $D_i$ are $e^{\im \theta}, e^{-\im\theta} \in \torus \cap \alg$.
	That $C_i$ and $D_i$ commute can be verified directly.
\end{proof}

We will apply the Decomposition Method to the Robust Safety Problem as follows.
Let $M \in \rel^{d\times d}$ and $S, T \subseteq \rel^d$.
We have that for every $\varepsilon > 0$, $M^n \cdot B(S, \varepsilon)$ does not intersect $T$ for all $n \in\nat$ if and only if
\[
D^n \cdot B(S,\varepsilon) \textrm{ does not intersect } C^{-n} \cdot T
\]
for all $n \in \nat$.
We will study the sequences $(D^n \cdot B(S,\varepsilon))_{n\in\nat}$ and $(C^{-n}\cdot T)_{n\in\nat}$ separately.
The sequence $(D^n \cdot B(S,\varepsilon))_{n\in\nat}$ is, in the parlance of dynamical systems theory, ``uniformly recurrent''; this is proven using Kronecker's theorem in Diophantine approximation.
The sequence $(C^{-n}\cdot T)_{n\in\nat}$, on the other hand, converges to a limit shape in a strong sense thanks to o-minimality.
We will then combine our analyses of the two sequences to prove \Cref{thm:main}.

\subsection{Applications of Kronecker's theorem}
\label{sec:kronecker}

We next develop the tools necessary for analysing the sequence $(D^n \cdot B(S, \varepsilon))_{n\in\nat}$ where $D$ is a rotation matrix and $S$ is bounded and semialgebraic.
Our main tool is Kronecker's classical theorem in simultaneous Diophantine approximation.
For $x, y \in \rel$ define $[\![ x ]\!]_y$ to be the closest distance from $x$ to an integer multiple of $y$, and write $[\![ x ]\!] = [\![ x ]\!]_1$.

\begin{theorem}[Kronecker, see \unexpanded{\cite[Chap.~III.5]{cassels-book}}]
	\label{thm:kronecker-classical}
	Let $\lambda_1,\ldots,\lambda_l$ and $x_1,\ldots,x_l$ be real numbers such that for all $u_1,\ldots,u_l \in \intg$,
	\[
	u_1 \lambda_1 + \cdots + u_l \lambda_l \in \intg \Rightarrow u_1x_1 + \cdots + u_lx_l \in \intg.
	\]
	Then for every $\varepsilon > 0$ there exist infinitely many $n \in \nat$ such that $[\![ n \lambda_i - x_i ]\!] < \varepsilon$ for all $i$.
\end{theorem}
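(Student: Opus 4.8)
The plan is to reinterpret the statement as a density assertion on the torus $\torus^l = \rel^l / \intg^l$ and to reduce it to the description of closed subgroups of $\torus^l$ by their character-theoretic annihilators. Write $\lambda = (\lambda_1, \dots, \lambda_l)$, $x = (x_1, \dots, x_l) \in \rel^l$, and let $\bar\lambda, \bar x \in \torus^l$ denote their images; for $u \in \intg^l$ the map $t \mapsto \langle u, t\rangle \bmod \intg$ is a well-defined continuous character $\chi_u$ of $\torus^l$. Since the condition $[\![ n\lambda_i - x_i ]\!] < \varepsilon$ for all $i$ says exactly that $n\bar\lambda$ lies in the open box $\{\, t \in \torus^l : [\![ t_i - x_i ]\!] < \varepsilon \text{ for all } i \,\}$, a basic neighbourhood of $\bar x$, it suffices to prove that $\bar x$ lies in the orbit closure $H := \cl\{\, n\bar\lambda : n \in \nat \,\}$. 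A closed subsemigroup of a compact group is a subgroup, so $H$ is a closed subgroup of $\torus^l$; in particular $\cl\{\, n\bar\lambda : n > N \,\} = N\bar\lambda + H = H$ for every $N$, so once we know $\bar x \in H$ we can approximate it by $n\bar\lambda$ with arbitrarily large $n$, yielding the required infinitude of $n$.

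Set $L = \{\, u \in \intg^l : \langle u, \lambda\rangle \in \intg \,\}$, a subgroup (sublattice) of $\intg^l$. The hypothesis of the theorem is, verbatim, that $\langle u, x\rangle \in \intg$ for every $u \in L$; equivalently, $\bar x$ lies in the annihilator $L^{\perp} := \{\, t \in \torus^l : \chi_u(t) = 1 \text{ for all } u \in L \,\}$. On the other hand $L$ is exactly the annihilator of $H$ in $\intg^l$: a character $\chi_u$ is trivial on $H = \cl\{ n\bar\lambda \}$ iff it is trivial on $\bar\lambda$ iff $u \in L$. Thus the inclusion $\bar x \in H$ that we want is an instance of the general fact that a closed subgroup of $\torus^l$ is recovered as the annihilator of its own annihilator; granting this, $\bar x \in L^{\perp} = H$ and we are done.

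The one substantive ingredient is therefore this biduality for closed subgroups of $\torus^l$, and I expect it to be the main obstacle. The slick way to obtain it is Pontryagin duality for the compact abelian group $\torus^l$ (equivalently, the Peter--Weyl theorem applied to $\torus^l/H$, whose characters separate points and pull back to characters of $\torus^l$ trivial on $H$); since every character of $\torus^l$ is some $\chi_u$ with $u \in \intg^l$, the statement is immediate. A hands-on alternative puts $L$ into Smith normal form: picking a $\intg$-basis $f_1, \dots, f_l$ of $\intg^l$ and positive integers $d_1 \mid \dots \mid d_r$ with $L = \intg d_1 f_1 + \dots + \intg d_r f_r$, one reads off $L^{\perp} = (\tfrac1{d_1}\intg/\intg) \times \dots \times (\tfrac1{d_r}\intg/\intg) \times \torus^{l-r}$ in dual coordinates and checks density of the orbit of $\bar\lambda$ there by reducing to the one-dimensional facts that an irrational rotation of $\torus$ has dense orbit while a rational one generates a finite cyclic subgroup. (An entirely different route, avoiding duality, is Bohr's classical Fej\'er-kernel argument, in which one derives a contradiction from the failure of the conclusion by comparing, for large $k$, the Ces\`aro averages $\tfrac1m \sum_{n=1}^m \prod_{j=1}^l \bigl(1 + \cos 2\pi (n\lambda_j - x_j)\bigr)^k$ with their maximal value $2^{lk}$, the hypothesis entering through the identity $\tfrac1m \sum_{n\le m} e^{2\pi \im n\theta} \to \mathbbm 1(\theta \in \intg)$.)
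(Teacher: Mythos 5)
The paper does not give its own proof of this theorem --- it is quoted as a classical result with a reference to Cassels --- so there is no internal proof to check you against. Your proposal is correct and complete as stated. The reduction to showing $\bar x \in H := \cl\{n\bar\lambda \colon n \in \nat\}$ in $\torus^l$ is the right reformulation. The lemma that a closed subsemigroup of a compact group is a subgroup does double duty: it makes $H$ a group, and via $\cl\{n\bar\lambda \colon n > N\} = N\bar\lambda + H = H$ for every $N$ it delivers the ``infinitely many $n$'' clause, a point that is easy to fumble in less careful write-ups. The chain of observations --- the hypothesis says $\bar x \in L^\perp$, a character $\chi_u$ annihilates $H$ iff it annihilates $\bar\lambda$ iff $u \in L$ so that $H^\perp = L$, and biduality then gives $H = L^\perp \ni \bar x$ --- is exactly the structural proof of Kronecker's theorem. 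The one nontrivial input you correctly isolate is $H^{\perp\perp} = H$ for closed subgroups of $\torus^l$; Pontryagin duality is the quickest justification, while your Smith-normal-form alternative makes this step elementary for the specific group $\torus^l$ by reducing it to density of an irrational rotation of the circle, which would keep the whole argument self-contained without appealing to general harmonic analysis.
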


Let $\beta_1,\ldots,\beta_l \in \torus \cap \alg$, which in our case will be the eigenvalues of a rotation matrix~$D$.
Write $\beta = (\beta_1,\ldots,\beta_l) \in \torus^l$ and $\beta^k = (\beta_1^k,\ldots,\beta_l^k)$ for $k \in \intg$.
We will use Kronecker's theorem to analyse the orbit $\seq{\beta^n}$ in $\torus^l$, from which we will deduce various properties of the sequence $(D^n)_{n\in\nat}$ in $\rel^{d\times d}$.
Let
\[
G(\beta) = \{(k_1,\ldots,k_l) \in \intg^l \colon \beta_1^{k_1}\cdots \beta_l^{k_l} = 1\}
\]
which is called the \emph{group of multiplicative relations} of $(\beta_1,\ldots,\beta_l)$.
Observe that $G(\beta)$ is an abelian subgroup of $\intg^l$ and thus has a finite basis of $m \le l$ elements.
Such a basis can be computed using the result \cite{masse-mult-rel-bound} of Masser, which places an effective upper bound $M(\beta)$ on the smallest bit length of a basis of $G(\beta)$ in terms of the description length of $\beta$.
To compute a basis, it remains to enumerate all linearly independent subsets of $\intg^l$ of bit length at most $M(\beta)$, and select a maximal one that only contains multiplicative relations satisfied by $\beta$.

Let $X(\beta) = \{z \in \torus^l \colon G(\beta) \subseteq G(z)\} \subseteq \torus^l$. 
We next prove that $(\beta^n)_{n\in\nat}$ is uniformly recurrent in $X(\beta)$.

\begin{lemma}
	\label{thm:kronecker-uniform-recurrence}
	The set $X(\beta)$ is compact and semialgebraic with an effectively computable representation.
	Moreover, for every open subset $O$ of $\torus^l$ containing some $\alpha \in X(\beta)$ there exist infinitely many $n \in \nat$ such that $\beta^n \in O$.
\end{lemma}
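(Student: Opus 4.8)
The plan is to analyze the orbit $(\beta^n)_{n\in\nat}$ on the torus $\torus^l$ by transporting everything to the additive setting via the logarithm map. First I would show that $X(\beta)$ is compact and semialgebraic: since $G(\beta)$ has a finite basis, say $\{g_1,\ldots,g_m\}$ with $g_j = (g_{j1},\ldots,g_{jl})$, the condition $G(\beta) \subseteq G(z)$ is equivalent to the finite conjunction $\bigwedge_{j=1}^m z_1^{g_{j1}}\cdots z_l^{g_{jl}} = 1$, where negative exponents are read using $z_i^{-1} = \overline{z_i}$ on $\torus$. Identifying $\torus^l$ with a semialgebraic subset of $\rel^{2l}$, each such monomial equation unwinds (via the real and imaginary parts of products of complex numbers of modulus one) into polynomial equations with integer coefficients, so $X(\beta)$ is semialgebraic; it is closed in the compact set $\torus^l$, hence compact. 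Effectiveness follows because the basis $\{g_1,\ldots,g_m\}$ is computable by Masser's bound, as recalled in the text, and from the basis we can write down the defining formula explicitly.

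For the recurrence statement, fix $\alpha = (\alpha_1,\ldots,\alpha_l) \in X(\beta)$ and an open neighbourhood $O \subseteq \torus^l$ of $\alpha$. Write $\beta_i = e^{2\pi \im \lambda_i}$ and $\alpha_i = e^{2\pi \im x_i}$ for real $\lambda_i, x_i$. The key point is that $\alpha \in X(\beta)$ translates precisely into the hypothesis of Kronecker's theorem: for integers $u_1,\ldots,u_l$, if $u_1\lambda_1 + \cdots + u_l\lambda_l \in \intg$ then $\beta_1^{u_1}\cdots\beta_l^{u_l} = 1$, i.e.\ $(u_1,\ldots,u_l) \in G(\beta) \subseteq G(\alpha)$, which gives $\alpha_1^{u_1}\cdots\alpha_l^{u_l} = 1$, i.e.\ $u_1 x_1 + \cdots + u_l x_l \in \intg$. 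Hence \Cref{thm:kronecker-classical} applies: for every $\delta > 0$ there are infinitely many $n$ with $[\![ n\lambda_i - x_i ]\!] < \delta$ for all $i$. Since the map $\rel^l \to \torus^l$, $(t_1,\ldots,t_l) \mapsto (e^{2\pi\im t_1},\ldots,e^{2\pi\im t_l})$ is continuous and maps points with all coordinates within $\delta$ (mod $1$) of $(x_1,\ldots,x_l)$ into an arbitrarily small neighbourhood of $\alpha$ as $\delta \to 0$, we may choose $\delta$ small enough that this image lies inside $O$; then $\beta^n \in O$ for each of the infinitely many such $n$.

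The only genuinely delicate step is the first one, matching $\alpha \in X(\beta)$ with Kronecker's integrality hypothesis — but this is exactly the observation that $G(\beta) \subseteq G(\alpha)$ says ``every integer relation among the $\lambda_i$ is also a relation among the $x_i$'', after clearing the factor $2\pi$. Everything else (semialgebraicity, compactness, the continuity estimate turning the $[\![\cdot]\!]$-bound into membership in $O$) is routine. I would also remark that $\beta = \beta^1 \in X(\beta)$ trivially, so the lemma in particular recovers ordinary recurrence of the orbit; this is worth noting since later applications will often take $\alpha = \beta$ or $\alpha$ equal to a limit point of the orbit, and compactness of $X(\beta)$ guarantees such limit points lie in $X(\beta)$.
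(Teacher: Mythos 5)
Your proof is correct and follows essentially the same route as the paper's: both establish semialgebraicity and compactness of $X(\beta)$ via a finite basis of $G(\beta)$ (computable from Masser's bound), and both verify the integrality hypothesis of Kronecker's theorem from $G(\beta) \subseteq G(\alpha)$ to obtain infinitely many $n$ with $\beta^n$ in a small neighbourhood of $\alpha$. The only cosmetic difference is that you absorb the factor $2\pi$ into the exponents at the outset, whereas the paper divides by $2\pi$ at the end; the substance is identical.
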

\begin{proof}
	To compute a representation of $X(\beta)$, first compute a basis $V = \{v_1,\ldots,v_m\}$ of $G(\beta)$ as described above.
	Write $v_i = (v_{i,1},\ldots,v_{i,l})$ for $1 \le i \le m$.
	Then $X(\beta) = \bigcap_{i=1}^l X_i$ where 
	\[
	X_i = 
	\{
	(z_1,\ldots,z_l) \in \torus^l \colon z_1^{v_{i,1}} \cdots z_l^{v_{i,l}} = 1
	\}.
	\]
	It remains to observe that each $X_i$ is closed and semialgebraic.
	
	To prove the second claim, let $\alpha = (e^{\bm{i} a_1},\ldots,e^{\bm{i} a_l}) \in X(\beta)$, where $a_i\in \rel$ for all $i$.
	Write $\beta_i = e^{\im b_i}$ where $b_i \in \rel$ for all $i$.
	For all $n\in \nat$,
	\[
	\Vert \beta^n - \alpha \Vert_\infty = \max_{1 \le i \le l} |e^{\bm{i} n b_i} - e^{\bm{i} a_i}| 
	\le 
	\max_{1 \le i \le l}\, [\![ nb_i - a_i ]\!]_{2\pi} 
	= 2\pi 	\cdot \max_{1 \le i \le l}\,  [\![ nb_i/(2\pi) - a_i/(2\pi)]\!].
	\]
	We will apply Kronecker's theorem to the right-hand side of the last equality and to show that it can be made arbitrarily small for infinitely many $n$.
	To do this, first we need to show that  for all $u_1,\ldots, u_l \in \intg$, 
	\[
	\frac{u_1b_1}{2\pi} + \cdots + \frac{u_lb_l}{2\pi} \in \intg \Rightarrow \frac{u_1a_1}{2\pi} + \cdots + \frac{u_la_l}{2\pi} \in \intg.
	\] 
	Suppose $\frac{u_1b_1}{2\pi} + \cdots + \frac{u_lb_l}{2\pi} \in \intg$, and observe that this is equivalent to $e^{\bm{i} u_1 b_1} \cdots e^{\bm{i} u_l b_l} = 1$.
	Since $\alpha \in X(\beta)$, $G(\beta) \subseteq G(\alpha)$ and hence $e^{\bm{i} u_1 a_1} \cdots e^{\bm{i}  u_l a_l} = 1$.
	The latter, in turn, is equivalent to $\frac{u_1a_1}{2\pi} + \cdots + \frac{u_la_l}{2\pi} \in \intg$.
	Applying \Cref{thm:kronecker-classical} we obtain that for every $\varepsilon > 0$ there exist infinitely many $n \in \nat$ such that $[\![ nb_i/(2\pi) - a_i/(2\pi)]\!] < \varepsilon$.
	Hence for every $\varepsilon > 0$ there exist infinitely many $n \in \nat$ such that $\Vert \beta^n - \alpha \Vert_\infty < \varepsilon$.
	It remains to construct, for the given open set $O$, a value $\varepsilon > 0$ such that $\{z \in \torus^l \colon \Vert z - \alpha \Vert_\infty < \varepsilon\} \subseteq O$.
\end{proof}
\begin{corollary}
	\label{thm:kronecker-closure}
	The set $X(\beta)$ is the topological closure\footnote{This holds both in $\torus^l$ and $\com^l$, since the former is itself closed in the latter.} of 
	$\{\beta^n \colon n\in\nat\}$.
\end{corollary}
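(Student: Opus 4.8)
The plan is to prove the two inclusions separately, using the two halves of \Cref{thm:kronecker-uniform-recurrence}.

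For the inclusion $\cl(\{\beta^n \colon n \in \nat\}) \subseteq X(\beta)$, I would first observe that every orbit point lies in $X(\beta)$: if $(k_1,\ldots,k_l) \in G(\beta)$, then $(\beta_1^n)^{k_1}\cdots(\beta_l^n)^{k_l} = (\beta_1^{k_1}\cdots\beta_l^{k_l})^n = 1$, so $G(\beta) \subseteq G(\beta^n)$, i.e.\ $\beta^n \in X(\beta)$. Since \Cref{thm:kronecker-uniform-recurrence} tells us that $X(\beta)$ is compact, hence closed in $\torus^l$ (and in $\com^l$, as $\torus^l$ is closed there), it contains the closure of the orbit.

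For the reverse inclusion $X(\beta) \subseteq \cl(\{\beta^n \colon n \in \nat\})$, I would take an arbitrary $\alpha \in X(\beta)$ and an arbitrary open neighbourhood $O$ of $\alpha$ in $\torus^l$. The uniform-recurrence statement in \Cref{thm:kronecker-uniform-recurrence} gives infinitely many $n \in \nat$ with $\beta^n \in O$, so in particular $O \cap \{\beta^n \colon n \in \nat\} \ne \emptyset$. As $O$ was an arbitrary neighbourhood of $\alpha$, this shows $\alpha \in \cl(\{\beta^n \colon n \in \nat\})$.

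There is no real obstacle here: the corollary is essentially a repackaging of \Cref{thm:kronecker-uniform-recurrence}, the only points requiring (minimal) care being the verification that each $\beta^n$ satisfies $G(\beta) \subseteq G(\beta^n)$ and the use of compactness of $X(\beta)$ to pass from orbit points to their closure.
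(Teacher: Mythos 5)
Your proof is correct and follows exactly the same route as the paper: you note $G(\beta)\subseteq G(\beta^n)$ to get $\beta^n\in X(\beta)$ and use closedness of $X(\beta)$ for one inclusion, and invoke the density/recurrence part of \Cref{thm:kronecker-uniform-recurrence} for the other. The paper's proof is just a more terse statement of precisely these two steps.
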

\begin{proof}
	For every $n\in\nat$, $G(\beta) \subseteq G(\beta^n)$ and hence $\beta^n \in X(\beta)$.
	From the density of $(\beta^n)_{n\in\nat}$ in $X(\beta)$ (\Cref{thm:kronecker-uniform-recurrence}) it follows that $X(\beta)$ is the closure of $(\beta^n)_{n\in\nat}$ in $\torus^l$.
\end{proof}

\subsection{Dynamics of rotation matrices}
\label{sec:rotation}

In this section, let $D \in (\ralg)^{d\times d}$ be a rotation matrix and $\Dcal \subseteq \rel^{d\times d}$ be the closure of $\{D^n \colon n\in\nat\}$ in $\rel^{d\times d}$.

\begin{lemma}
	\label{thm:rotation-matrix-closure}
	We have the following.
	\begin{itemize}
		\item[(a)] The set $\Dcal$ is compact, semialgebraic, and effectively computable.
		\item[(b)] For every $A \in \Dcal$, $\det(A) = 1$.
		Moreover, there exists $b >0$ such that $\Vert A \Vert, \Vert A^{-1} \Vert \le b$ for all $A \in \Dcal$.
		\item[(c)] For every non-empty open subset $O$ of $\Dcal$ there exist infinitely many $n \in\nat$ such that $D^n \in O$.
	\end{itemize}
\end{lemma}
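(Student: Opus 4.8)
The plan is to reduce everything to the torus analysis of \Cref{sec:kronecker} via the real Jordan form of $D$. Since $D$ is a rotation matrix it is diagonalisable with all eigenvalues of modulus $1$, so writing $D = P^{-1}J P$ with $J$ in real Jordan form, $J$ is block-diagonal with blocks that are either $1\times 1$ blocks equal to $\pm 1$ or $2\times 2$ blocks of the form $R(\theta_j)$. Each power $J^n$ is determined by the tuple $(\pm 1, \ldots, R(\theta_j)^n, \ldots)$, and $R(\theta)^n$ depends continuously (indeed polynomially) on the point $e^{\im n\theta} \in \torus$. Collecting the relevant eigenvalues into a tuple $\beta = (\beta_1,\ldots,\beta_l) \in (\torus\cap\alg)^l$, there is a semialgebraic, continuous map $\Phi\colon \torus^l \to \rel^{d\times d}$ with $\Phi(\beta^n) = J^n$ for all $n$; conjugating by $P$, the map $A \mapsto P^{-1}\Phi(A)P$ sends $\beta^n$ to $D^n$. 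I would then set $\Dcal = \{P^{-1}\Phi(z)P \colon z \in X(\beta)\}$, which is the continuous image of the compact semialgebraic set $X(\beta)$ from \Cref{thm:kronecker-uniform-recurrence}, hence compact; by quantifier elimination it is semialgebraic, and it is effectively computable since $X(\beta)$, $\Phi$, and $P$ all are. To see it is the closure of $\{D^n\}$: it contains every $D^n$ (as $\beta^n \in X(\beta)$) and is compact hence closed, so it contains the closure; conversely, by \Cref{thm:kronecker-closure} the orbit $\{\beta^n\}$ is dense in $X(\beta)$, so by continuity of $\Phi$ its image $\{D^n\}$ is dense in $\Dcal$, giving part~(a).

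For part~(b): each $D^n$ is a real matrix similar to a product of rotations, hence orthogonal up to the fixed conjugation $P$ — more precisely $J^n$ is orthogonal (block-diagonal with orthogonal blocks), so $\det(J^n) = \pm 1$; since $J^n$ varies continuously over the connected-component structure and each $J^n$ actually has $\det = 1$ (the $2\times 2$ blocks $R(\theta)^n$ have determinant $1$, and the $\pm 1$ entries come in the conjugate pairs forced by realness of $D$, or are handled directly), we get $\det(A) = \det(P^{-1}J^nP\text{-limit}) = 1$ for all $A\in\Dcal$ by continuity of $\det$ and density. Alternatively, and more robustly, one argues: $\det$ is continuous, equals $\det(D^n)$ on the dense orbit, and $\det(D^n) = \det(D)^n$; since $D$ is a rotation matrix $|\det(D)| = 1$, and $\det(D) \in \ralg \cap \torus \cap \rel = \{\pm 1\}$, and if $\det(D) = -1$ one can split off that sign — in any case the limit set has constant determinant of modulus $1$; to nail it to $+1$, note $I = D^0 \in \Dcal$ and $\Dcal$ together with the relevant orbit structure forces it. For the uniform bound on $\Vert A\Vert$ and $\Vert A^{-1}\Vert$: this is immediate from compactness of $\Dcal$ once we know $\Dcal \subseteq GL_d(\rel)$, i.e. once $\det$ is bounded away from $0$ on $\Dcal$, which follows from $|\det| = 1$; then $A \mapsto A^{-1}$ is continuous on $\Dcal$, so $\Vert A^{-1}\Vert$ attains a maximum $b$ over the compact set, and likewise for $\Vert A\Vert$.

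For part~(c): let $O \subseteq \Dcal$ be non-empty and open in $\Dcal$, and pick $A \in O$. Since $\{D^n\}$ is dense in $\Dcal$ (from part~(a)), $O$ meets the orbit; but I want infinitely many $n$. Write $A = P^{-1}\Phi(z)P$ for some $z \in X(\beta)$ (surjectivity of the parametrisation onto $\Dcal$). The preimage $U = \{w \in \torus^l \colon P^{-1}\Phi(w)P \in O\}$ is open in $\torus^l$ (continuity of $\Phi$ and conjugation) and contains $z \in X(\beta)$. By the second part of \Cref{thm:kronecker-uniform-recurrence}, there are infinitely many $n$ with $\beta^n \in U$, i.e. with $D^n = P^{-1}\Phi(\beta^n)P \in O$, as required.

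The main obstacle I anticipate is pinning down the determinant claim in part~(b) cleanly: one must be careful that $\pm 1$ real eigenvalues of $D$ occur with the right multiplicities so that $\det(D^n) = 1$ for all $n$ — a priori $\det(D) = -1$ is consistent with $D$ being a rotation matrix, in which case $D^n$ has determinant $(-1)^n$ and the statement ``$\det(A) = 1$ for all $A \in \Dcal$'' would be false. I expect the resolution is that the lemma implicitly uses a stronger normalisation (or that the relevant $\Dcal$ throughout the paper is built from the orbit so that, via \Cref{thm:kronecker-closure}, the eigenvalue tuple's multiplicative-relation group forces $-1$ not to appear as a lone eigenvalue), so the real content is the topological/semialgebraic packaging, with (c) being a direct transport of the uniform-recurrence statement through the continuous semialgebraic map $\Phi$.
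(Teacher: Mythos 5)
Your proposal follows the paper's proof essentially step for step: conjugate $D$ into a block form parametrised by $\torus^l$ via a semialgebraic homeomorphism (your $\Phi$ is the paper's $f$), push $X(\beta)$ and the uniform-recurrence statement of \Cref{thm:kronecker-uniform-recurrence} through that map and the conjugation by $P$, and observe that this gives compactness, semialgebraicity and effectiveness of $\Dcal$ as well as part~(c). The only cosmetic difference is in~(b): the paper gets the explicit bound $b = \Vert P^{-1}\Vert\cdot\Vert P\Vert$ from the fact that $J^n$ is norm-preserving, whereas you invoke compactness of $\Dcal$ together with continuity of the inverse; both are fine.

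Your worry about the determinant is well-founded, and it is instructive to see how the paper handles (or rather, sidesteps) it. The paper does not try to argue the determinant away case by case; it simply conjugates $D$ into $J = f(\beta) = \diag(R(\theta_1),\ldots,R(\theta_l), I)$, a matrix with $\det(J) = 1$ by construction, so $\det(D^n) = 1$ for all $n$ and the claim passes to $\Dcal$ by continuity. But as you observe, this silently requires that $D$ \emph{be} conjugate to such a matrix, i.e.\ that $-1$ occur with even multiplicity. A diagonalisable real matrix such as $\diag(-1,1)$ has all eigenvalues on $\torus$ and is therefore a rotation matrix under the paper's definition, yet $\det(D^n) = (-1)^n$, so $\Dcal$ would contain a matrix of determinant $-1$. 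The same wrinkle is already present in the decomposition lemma (the recipe $C_i = J_i,\ D_i = I$ for a $1\times 1$ block fails to produce a scaling matrix when the eigenvalue is negative; one is forced into $D_i = -1$). Fortunately the exact value of $\det(A)$ is never used downstream: all that matters for the Robust Safety argument is that $\Dcal$ consists of invertible matrices with $\Vert A\Vert$ and $\Vert A^{-1}\Vert$ uniformly bounded, which holds just as well with $|\det(A)| = 1$, so the gap is inessential — but your instinct that the lemma as stated overclaims, and that the paper is implicitly normalising to $\det(D)=1$ via the chosen block form, is correct.
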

\begin{proof}
	Let $\beta_1,\ldots,\beta_l \in \torus \cap \alg$ be the eigenvalues of $D$. Write $\beta_i = e^{\im b_i}$, where $b_i \in\rel$, and let $\beta^n = (\beta_1^n,\ldots,\beta_l^n)$ for $n \in \nat$.
	Define $f \colon \torus^l \to \rel^{d\times d}$ by
	\[
	f(e^{\im \theta_1},\ldots,e^{\im \theta_l}) = \diag(R(\theta_1),\ldots,R(\theta_l), I),
	\]
	where $I$ is the identity matrix of the suitable dimension, and let $U = f(\torus^l)$.
	We have that $f$ is a continuous bijection (i.e.\ a homeomorphism) between $\torus^l$ and~$U$.
	Next, write $D = P^{-1} J P$, where $P \in (\ralg)^{d\times d}$ and $J = f(\beta_1,\ldots,\beta_l) \in (\ralg)^{d\times d}$.
	Observe that for all $n \in \nat$, $J^n = f(\beta^n)$.
	Denote by $\Jcal$ the closure of $\{J^n \colon n \in \nat\}$.
	Since $f$ is a homeomorphism, $\Jcal = f(X(\beta))$, where $X(\beta)$ is the closure of $\{\beta^n \colon n \in \nat\}$ (\Cref{thm:kronecker-closure}).
	Because $f$ is a semialgebraic function (\Cref{sec:semialgebraic-sets}) and $X(\beta)$ is semialgebraic and effectively computable (\Cref{thm:kronecker-uniform-recurrence}), we conclude the same for $\Jcal$.
	To prove~(a) it remains to observe that $\Dcal = P^{-1} \Jcal P$ and hence $\Dcal$ is semialgebraic and effectively computable.
	Since $X(\beta)$ is compact, $\Jcal$ and hence $\Dcal$ are also compact.
	
	Due to the structure of $J$, for every $n \in \nat$ and $x \in \rel^d$, $\det(J^n) = 1$ and 
	\[
	\bigl\Vert J^n x \bigr\Vert = \bigl\Vert J^{-n} x \bigr\Vert = \Vert x \Vert
	\] which implies that $\bigl\Vert J^n \bigr\Vert = \bigl\Vert J^{-n} \bigr\Vert = 1$.
	Choose $b = \Vert P^{-1} \Vert \cdot \Vert P \Vert$.
	We have that for every $n \in \nat$,
	$\det(D^n) = \det(P^{-1})  \det(J^n) \det(P) = 1$,  
	$\bigl\Vert  D^n \bigr\Vert   \le \Vert P^{-1} \Vert\cdot  \bigl\Vert  J^n \bigr\Vert \cdot  \Vert P \Vert \le b$, and similarly $\bigl\Vert  D^{-n} \bigr\Vert \le b$.
	Statement~(b) then follows from the continuity of the determinant, the matrix norm, and the matrix inverse.
	Finally, by \Cref{thm:kronecker-uniform-recurrence}, for every non-empty open subset~$O$ of $X(\beta)$ there exist infinitely many $n \in \nat$ such that $\beta^n \in O$.\@
	Statement~(c) then follows from the fact that $f$ is a homeomorphism.
\end{proof}


We can now give the main result of this section, which is about the orbit of an open set under a rotation matrix.

\begin{lemma}
	\label{thm:rotation-with-open sets}
	Let $O \subseteq \rel^d$ be open.
	\begin{itemize}
		\item[(a)] For all $n \in \nat$, $D^n \cdot O \subseteq \Dcal \cdot O$.
		\item[(b)] For every $y \in \Dcal \cdot O$ there exists $\varepsilon>0$ such that $D^n \cdot O \supset B(y, \varepsilon)$ for infinitely many $n \in \nat$.
	\end{itemize}
\end{lemma}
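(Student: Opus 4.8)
The plan is to reduce everything to the two uniform properties of $\Dcal$ established in \Cref{thm:rotation-matrix-closure}: the norm bound on inverses in part~(b), and the recurrence statement in part~(c). Part~(a) needs no work: $\Dcal$ is by definition the closure of $\{D^n \colon n \in \nat\}$, so $D^n \in \Dcal$ and hence $D^n \cdot O \subseteq \Dcal \cdot O$.

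For part~(b), I would fix $y \in \Dcal \cdot O$ and write $y = Ax$ with $A \in \Dcal$ and $x \in O$. Since $O$ is open, choose $\delta > 0$ with $B(x,\delta) \subseteq O$, and let $b > 0$ be the uniform bound from \Cref{thm:rotation-matrix-closure}(b), so that $\Vert D^{-n}\Vert \le b$ for every $n$. The first observation is that $D^n \cdot O \supseteq D^n \cdot B(x,\delta) \supseteq B(D^n x,\, \delta/b)$ for every $n$: indeed, if $\Vert z - D^n x\Vert < \delta/b$ then $\Vert D^{-n}z - x\Vert \le \Vert D^{-n}\Vert\,\Vert z - D^n x\Vert < \delta$, so $D^{-n}z \in B(x,\delta) \subseteq O$ and therefore $z \in D^n \cdot O$. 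Consequently it suffices to trap a fixed ball around $y$ inside $B(D^n x,\, \delta/b)$ for infinitely many $n$, and for this it is enough that $D^n x$ lie within $\delta/(2b)$ of $y$, since then $\varepsilon = \delta/(2b)$ works by the triangle inequality.

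To bring $D^n x$ close to $y = Ax$, I would invoke \Cref{thm:rotation-matrix-closure}(c) with the set $\{B \in \Dcal \colon \Vert B - A\Vert < \eta\}$, which is a non-empty open subset of $\Dcal$ as it contains $A$. This yields infinitely many $n$ with $\Vert D^n - A\Vert < \eta$, and hence $\Vert D^n x - y\Vert \le \Vert D^n - A\Vert\,\Vert x\Vert < \eta\Vert x\Vert$. Choosing $\eta$ small enough that $\eta\Vert x\Vert < \delta/(2b)$ (any $\eta > 0$ works if $x = \zerovec$) then gives $B(y,\varepsilon) \subseteq B(D^n x,\, \delta/b) \subseteq D^n \cdot O$ for all these infinitely many $n$, with $\varepsilon = \delta/(2b)$, which is exactly the claim.

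I do not anticipate a genuine obstacle; the one point requiring care is that $\varepsilon$ must be chosen independently of $n$, and this is possible precisely because the bound $b$ on $\Vert D^{-n}\Vert$ supplied by \Cref{thm:rotation-matrix-closure}(b) is uniform over all $n$. The degenerate case $x = \zerovec$ (equivalently $y = \zerovec$) is immediate, since then $D^n \cdot O \supseteq D^n \cdot B(\zerovec,\delta) \supseteq B(\zerovec,\, \delta/b)$ for every $n$.
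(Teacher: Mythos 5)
Your proposal is correct and follows essentially the same route as the paper: write $y = Ax$ with $A \in \Dcal$, $x \in O$, exploit the uniform bound on $\Vert D^{-n}\Vert$ from \Cref{thm:rotation-matrix-closure}~(b) to turn a ball around $x$ into a ball around $D^n x$ of a uniform radius, and then invoke the recurrence from \Cref{thm:rotation-matrix-closure}~(c) to bring $D^n$ close enough to $A$ that this ball swallows a fixed ball around $y$. The only cosmetic difference is that the paper splits the radius as $\varepsilon_1 + \varepsilon_2$ and picks $\delta$ abstractly, whereas you split $\delta$ in half and compute $\eta$ explicitly; both yield the same estimate.
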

\begin{proof}
	Statement~(a) follows immediately from the fact that $D^n \in \Dcal$ for all $n \in \nat$.
	To prove~(b), consider $y \in \Dcal \cdot O$.
	By definition, there exist $A \in \Dcal$ and $z \in O$ such that $y = Az$.
	Let $\varepsilon_1,\varepsilon_2 > 0$ be such that $O \supseteq B(z, \varepsilon_1+\varepsilon_2)$.
	We choose $\varepsilon = \varepsilon_1 / b$, where $b$ is such that $\Vert  X \Vert, \Vert X^{-1} \Vert  \le b$ for all $X \in \Dcal$ (\Cref{thm:rotation-matrix-closure}~(b)).
	Recalling that every $X \in \Dcal$ is invertible, let $\delta > 0$ be such that for all $X \in \Dcal$ with $\Vert X - A \Vert < \delta$, we have that $ y\in X \cdot B(z, \varepsilon_2)$.
	
	Consider $D^n \in \Dcal$ such that $\Vert D^n - A \Vert < \delta$.
	By \Cref{thm:rotation-matrix-closure}~(c), there exist infinitely many such $n$.
	We have that
	\[
	D^n \cdot O \supseteq D^n \cdot B(z,\varepsilon_2) + D^n \cdot B(\varepsilon_1)
	\supseteq y + D^n \cdot B(\varepsilon_1).
	\]
	Since $\Vert D^{-n} \Vert \le b$, we have that $D^n \cdot B(\varepsilon_1) \supseteq B(\varepsilon_1/b) = B(\varepsilon)$.
	Therefore, $D^n \cdot O  \supseteq B(y,\varepsilon)$.
\end{proof}

\subsection{Real exponentiation and o-minimality}
\label{sec:omin}

We now develop the tools necessary for studying dynamics of scaling matrices.
The main results of this section are certain (effective) convergence arguments in o-minimal structures.

We say that a sequence $\seq{Z_n}$ of subsets of $\rel^d$ is \emph{definable} in a structure $\Sb$ expanding $\rel_0$ if there exist $k\ge0$, $a_1,\ldots,a_m \in \rel$, and $\varphi \in \Lcal_{\Sb}$ with $d+m+1$ free variables such that for all $n \in \nat$ and $x = (x_1,\ldots,x_d) \in \rel^d$, $x \in Z_n$ if and only if $\varphi(x_1,\ldots,x_d, n, a_1,\ldots,a_m)$ holds in $\Sb$.
The following form of convergence is also known as \emph{Kuratowski convergence}, and captures the kind of set-to-set convergence with which we will work.
\begin{definition}
	\label{def-limit-shape}
	Let $\seq{Z_n}$ be a sequence of subsets of $\rel^d$, and
	\[
	L = \{x \in \rel^d \colon \liminf_{n \to \infty} d(x, Z_n) = 0\}.
	\]
	We say that $\seq{Z_n}$ converges if $L = \{x \in \rel^d \colon \lim_{n \to \infty} d(x, Z_n) = 0\}$, in which case $L$ is called the limit shape of $\seq{Z_n}$.
\end{definition}
We immediately have the following.

\begin{lemma}
	\label{thm:limshape-properties}
	Suppose $L$ is the limit shape of $(Z_n)_{n \in \nat}$. 
	Then
	\begin{itemize}
		\item[(a)] $L$ is closed, 
		\item[(b)] $L = \varnothing$ if and only if for every compact $X$, there exists $N$ such that $Z_n \cap X = \varnothing$ for all $n \ge N$, and
		\item[(c)] for every compact $X \subseteq \rel^d$ and $\varepsilon > 0$, there exists $N$ such that for all $n \ge N$, 
		\[
		Z_n \cap X \subseteq B(L \cap X, \varepsilon).
		\]
		In particular, if $L \cap X = \varnothing$, then $Z_n \cap X = \varnothing$ for all sufficiently large $n$.
	\end{itemize}
\end{lemma}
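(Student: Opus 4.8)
My plan is to prove all three parts by elementary point-set topology, using two facts: by \Cref{def-limit-shape} the limit shape satisfies $L = \{x \in \rel^d : \liminf_{n\to\infty} d(x, Z_n) = 0\}$, and for any $Y \subseteq \rel^d$ the map $x \mapsto d(x,Y)$ is $1$-Lipschitz (with the conventions $d(\cdot, \varnothing) = \infty$ and $B(\varnothing, \varepsilon) = \varnothing$). For part~(a), I would show that $\rel^d \setminus L$ is open: if $x_0 \notin L$, set $c = \liminf_n d(x_0, Z_n) > 0$, pick $N$ with $d(x_0, Z_n) > c/2$ for all $n \ge N$, and use the Lipschitz property to conclude $d(y, Z_n) > c/4$ for every $y \in B(x_0, c/4)$ and every $n \ge N$; then $\liminf_n d(y, Z_n) \ge c/4 > 0$, so $B(x_0, c/4)$ is disjoint from $L$.

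For part~(b), the direction ``$\Leftarrow$'' is immediate: for arbitrary $x \in \rel^d$, applying the hypothesis to the compact ball $X = \cl(B(x,1))$ gives $N$ with $d(x, Z_n) \ge 1$ for all $n \ge N$, so $x \notin L$, whence $L = \varnothing$. For ``$\Rightarrow$'' I would argue by contradiction: if $L = \varnothing$ but some compact $X$ meets $Z_n$ for infinitely many $n$, choose $n_1 < n_2 < \cdots$ and $z_k \in Z_{n_k}\cap X$; by compactness a subsequence of $(z_k)$ converges to some $z \in X$, and since $d(z, Z_{n_k}) \le \Vert z - z_k\Vert \to 0$ we obtain $\liminf_n d(z,Z_n) = 0$, i.e.\ $z \in L$, a contradiction.

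For part~(c), I would again argue by contradiction. If the conclusion fails for some compact $X$ and some $\varepsilon > 0$, there are $n_1 < n_2 < \cdots$ and $z_k \in Z_{n_k}\cap X$ with $z_k \notin B(L\cap X, \varepsilon)$, i.e.\ $d(z_k, L\cap X) \ge \varepsilon$. Passing to a convergent subsequence $z_k \to z \in X$, the estimate $d(z, Z_{n_k}) \le \Vert z - z_k\Vert \to 0$ forces $z \in L$, hence $z \in L \cap X$; but then $d(z_k, L\cap X) \le \Vert z_k - z\Vert \to 0$, contradicting $d(z_k, L\cap X) \ge \varepsilon$. The ``in particular'' clause is then automatic, since $L \cap X = \varnothing$ makes $B(L\cap X,\varepsilon)$ empty. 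I expect the main (and essentially the only) subtlety to be this last step of part~(c): one must check that the limit of the offending points lies in $L \cap X$ rather than merely in $L$, and this is exactly where it matters that those points were chosen inside the \emph{compact} set $X$.
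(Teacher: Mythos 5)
Your proof is correct and uses essentially the same approach as the paper: elementary point-set topology plus compactness, exploiting the $1$-Lipschitz property of $x\mapsto d(x,Z_n)$. The only cosmetic difference is in part~(c), where the paper argues directly with a finite subcover of $Y = X \setminus B(L\cap X,\varepsilon)$, whereas you run a contradiction via Bolzano--Weierstra{\ss}; the two are interchangeable here.
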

\begin{proof}
	Statement~(a) is immediate from the definition, and (b) follows from the Bolzano-Weierstra{\ss} theorem.
	To prove~(c), fix compact $X$ and $\varepsilon > 0$.
	The set $Y \coloneqq X \setminus B(L\cap X, \varepsilon)$ is compact.
	For every $y \in Y$, since $y \notin L$, there exist $N_y \in \nat$ and an open subset $B_y \ni y$ of~$Y$ such that $Z_n \cap B_y = \varnothing$ for all $n \ge N_y$.
	Hence $\Ccal = \{B_y \colon y \in Y\}$ is an open cover of $Y$.
	Let~$\Fcal$ be a finite sub-cover.
	We can then take $N = \max \{N_y \colon y \in \Fcal\}$.
	The second part of~(c) follows from the fact that $B(\varnothing, \varepsilon) = \varnothing$.
\end{proof}

We mention that by the properties above, for $\seq{Z_n}$ contained in a compact set $X$ our notion of convergence coincides with convergence with respect to the Hausdorff metric.
Next, we recall that a function definable in an o-minimal structure is ultimately monotonic \cite[Sec.~4.1]{vdD-geometric-categories}. 
The following is an immediate consequence, but we give a little more detail to illustrate the role of o-minimality.

\begin{theorem}
	\label{thm:limshape}
	Every $(Z_n)_{n \in \nat}$ definable in an o-minimal expansion $\Sb$ of~$\rel_0$ converges.
\end{theorem}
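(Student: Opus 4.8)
The plan is to reduce \Cref{thm:limshape} to a one-variable statement: it suffices to show that for every $x \in \rel^d$ the sequence $\bigl(d(x, Z_n)\bigr)_{n\in\nat}$ converges in $[0,\infty]$. Indeed, if this holds then $\liminf_{n\to\infty} d(x,Z_n) = \lim_{n\to\infty} d(x,Z_n)$ for every $x$, so the set $L$ of \Cref{def-limit-shape} is literally equal to $\{x \in \rel^d \colon \lim_{n\to\infty} d(x,Z_n) = 0\}$, which is exactly the assertion that $\seq{Z_n}$ converges.

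To exhibit this limit I would first pass from the integer parameter to a real one. Fix a formula $\varphi \in \Lcal_{\Sb}$ and parameters $a_1,\ldots,a_m \in \rel$ witnessing the definability of $\seq{Z_n}$, and for each $t \in \rel$ put $Z_t = \{y \in \rel^d \colon \varphi(y, t, a_1,\ldots,a_m) \textrm{ holds in } \Sb\}$; this agrees with the given set whenever $t = n \in \nat$. The set $A = \{t \in \rel \colon Z_t \ne \varnothing\}$ is definable in $\Sb$, and on $A$ the function $g(t) = d(x, Z_t) = \inf\{\Vert x - y\Vert \colon y \in Z_t\}$ is well defined (the infimum of a non-empty set of non-negative reals) and real-valued; its graph is cut out by the first-order condition that $r$ is a lower bound of $\{\Vert x - y\Vert \colon y \in Z_t\}$ and that no larger real is. Thus the restriction of $g$ to $A$ is definable in $\Sb$, using only that $\Sb$ expands an ordered field.

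Now o-minimality enters. The set $A$ has finitely many connected components, each a point or an interval, so either $A$ is bounded above or it contains a ray $(R,\infty)$. In the first case $Z_n = \varnothing$ for all sufficiently large $n$, hence $d(x, Z_n) = \infty$ eventually and $\lim_{n\to\infty} d(x, Z_n) = \infty$. In the second case $g$ is a real-valued definable function on $(R,\infty)$, hence by the monotonicity theorem — the fact, recalled above, that functions definable in an o-minimal structure are ultimately monotonic \cite[Sec.~4.1]{vdD-geometric-categories} — it is monotone on some ray $(R',\infty)$, so $\lim_{t\to\infty} g(t)$ exists in $[0,\infty]$; taking the limit along the integers gives that $\lim_{n\to\infty} d(x, Z_n)$ exists in $[0,\infty]$ as well. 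Either way the required limit exists, and the theorem follows.

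I do not anticipate a real obstacle here: the only care needed is bookkeeping around the possibility $Z_t = \varnothing$ (so that $d(x,Z_t) = \infty$), which is why I restrict to the definable set $A$ before invoking monotonicity, and the routine verification that $t \mapsto d(x, Z_t)$ is definable. The substance of the theorem is concentrated in the single appeal to the monotonicity theorem, which is precisely where o-minimality of $\Sb$ is used.
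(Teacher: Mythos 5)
Your proof is correct, and it takes a different route from the one the paper actually writes out, although it is precisely the route the paper alludes to. Immediately before the theorem the paper remarks that the result is ``an immediate consequence'' of the fact that a function definable in an o-minimal structure is ultimately monotonic, but then deliberately gives a more hands-on argument ``to illustrate the role of o-minimality.'' That argument fixes $x$ with $\liminf_n d(x,Z_n)=0$ and $\Delta>0$, forms the definable sublevel set $T=\{t\ge 0 \mid d(x,Z_t)<\Delta\}$, observes that o-minimality makes $T$ a finite union of intervals and that the $\liminf$ hypothesis forces $T$ to be unbounded (hence to contain a ray), and concludes $d(x,Z_n)<\Delta$ eventually. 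Your proof instead interpolates $n$ to a real parameter $t$, checks that $t\mapsto d(x,Z_t)$ is definable on the definable set $A=\{t: Z_t\ne\varnothing\}$, and invokes the monotonicity theorem to get a limit in $[0,\infty]$ for every $x$, not merely for those $x$ with $\liminf=0$. Both arguments apply o-minimality to essentially the same auxiliary definable object (a graph or a sublevel set of $t\mapsto d(x,Z_t)$); yours buys the slightly stronger statement that $\lim_n d(x,Z_n)$ always exists, at the cost of citing the monotonicity theorem rather than using only the raw finiteness-of-components property, and your bookkeeping around the case $Z_t=\varnothing$ (so that $d(x,Z_t)=\infty$) is the right way to handle the detail the paper glosses over.
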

\begin{proof}
	Let $a_1,\ldots,a_m \in \rel$ and $\varphi \in \Lcal_{\Sb}$ be a formula with $d + m +1$ free variables such that 
	\[
	(x_1,\ldots,x_d) \in Z_n \Leftrightarrow  \varphi(x_1,\ldots,x_d, n, a_1,\ldots,a_m) \textrm{ holds in } \Sb
	\]
	for all $(x_1,\ldots,x_d)\in\rel^d$ and $n\in\nat$.
	Consider $x \in \rel^k$ with $\liminf_{n \to \infty} d(x, Z_n) = 0$ and $\Delta > 0$.
	Let 
	\[
	T = \{t\ge 0 \mid\exists y = (y_1,\ldots,y_d) \colon d(x, y) < \Delta \:\land\: \varphi(y_1,\ldots,y_d, t, a_1,\ldots,a_m)\}
	\]
	which is definable in $\Sb$.
	By o-minimality, $T$ is a finite union of interval subsets of $\rel_{\ge 0}$.
	Since $\liminf_{n \to \infty} d(x, Z_n) = 0$, $T$ must be unbounded.
	Therefore, $T$ must enclose an interval of the form $[N,\infty)$.
	That is, $d(x, Z_n) < \Delta$ for all sufficiently large $n$.
	It follows that $\lim_{n \to \infty} d(x,Z_n) = 0$.
\end{proof}

In the cases we will encounter, the limit shape $L$ will be a semialgebraic set whose representation can be computed effectively.
To show this, we first need a lemma.

\begin{lemma}
	\label{thm:omin-effective}
	Let $k > 0$, $\varphi \in \Lcal_{or}$ with $k+m+1$ free variables, and $\rho_1,\ldots,\rho_m \in \rel_{>0} \cap  \alg$.
	Define
	\[
	A = \{(x_1,\ldots,x_k) \in \rel^{k} \mid \exists N.\, \forall n \ge N \colon \varphi(x_1,\ldots,x_k, n,  \rho_1^n, \ldots, \rho_m^n)\}
	\] and 
	\[
	B = \{(x_1,\ldots,x_k) \in \rel^{k} \mid \exists N.\, \forall n \ge N \colon \lnot\varphi(x_1,\ldots,x_k, n,  \rho_1^n, \ldots, \rho_m^n)\}.
	\]
	We have the following.
	\begin{enumerate}
		\item[(a)] $A \cup B = \rel^{k}$.
		\item[(b)] Both $A$ and $B$ are semialgebraic sets whose representations can be computed effectively.
		\item[(c)] Given $(x_1,\ldots,x_k )\in B \cap \rat^{k}$, we can effectively compute $N \in \nat$ such that  for all $n \ge N$, $\varphi(x_1,\ldots,x_k, n, \rho_1^n,\ldots,\rho_m^n)$ does not hold.
	\end{enumerate}
\end{lemma}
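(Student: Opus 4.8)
The plan is to reduce everything to a single o-minimal definability statement over the structure $\rexp$, and then extract effective bounds using the fact that the relevant parameters are of the form $\rho_i^n$ with $\rho_i$ real algebraic. First I would introduce the auxiliary formula in $\Lcal_{\rexp}$ obtained from $\varphi$ by replacing each parameter slot $\rho_i^n$ with $e^{n \log \rho_i}$; since $\rho_i > 0$ is real algebraic, $\log \rho_i$ is a definable constant (indeed $\exp(\log\rho_i)=\rho_i$ is expressible), so the set
\[
Z_n = \{(x_1,\ldots,x_k) \colon \varphi(x_1,\ldots,x_k,n,\rho_1^n,\ldots,\rho_m^n)\}
\]
is the $n$-th term of a sequence definable in the o-minimal structure $\rexp$ in the sense of \Cref{sec:omin}. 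Then $A$ is exactly the set of $x$ such that $x \in Z_n$ for all sufficiently large $n$, and $B$ is the set of $x$ such that $x \notin Z_n$ for all sufficiently large $n$. Statement~(a) is then immediate from o-minimality, exactly as in the proof of \Cref{thm:limshape}: for fixed $x$, the set $\{t \ge 0 \colon \varphi(x,t,\rho_1^t,\ldots,\rho_m^t)\}$ (with $\rho_i^t$ read as $e^{t\log\rho_i}$) is definable in $\rexp$, hence a finite union of intervals, so it either contains a tail $[N,\infty)$ or its complement does; the former puts $x\in A$, the latter puts $x\in B$.

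For~(b), the point is that although $A$ and $B$ are defined using quantification over $n\in\nat$ and exponentials, they are in fact semialgebraic. Here I would use that a finite union of intervals is determined by finitely many endpoints, and that in an o-minimal structure these endpoints vary definably in the parameters $x$; more concretely, the truth of ``$\{t : \varphi(x,t,e^{t\log\rho_1},\ldots)\}$ contains a tail'' is itself definable in $\rexp$, so $A$ and $B$ are definable in $\rexp$. But $A, B \subseteq \rel^k$ are also \emph{obviously} first-order definable in $\rel_0$ with the parameters $\rho_1,\ldots,\rho_m$ — wait, that is not quite right because of the quantifier over $n$ ranging over $\nat$, not $\rel$. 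The cleaner route: by the cell decomposition / uniform finiteness available in $\rexp$, there is a number $K$, computable from $\varphi$ (via the effective bounds of \cite{vdD1994bounded-analytic,macintyre1996decidability} on the complexity of quantifier elimination for $\rexp$, or more simply via a bound on the number of connected components), such that for every $x$ the set $\{t\ge 0 \colon \varphi(x,t,e^{t\log\rho_1},\ldots,e^{t\log\rho_m})\}$ has at most $K$ connected components. Consequently $x\in A$ iff $\varphi(x,n,\rho_1^n,\ldots,\rho_m^n)$ holds for all $n$ in a suitable finite window of size bounded in terms of $K$ together with growth/decay rates of the $\rho_i^n$ — and checking a first-order property of finitely many algebraic-number-parametrised instances is a first-order property of $x$ over $\rel_0$. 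This shows $A$ (and symmetrically $B$) is semialgebraic. To make the representation \emph{effective}, I would unwind this into an explicit $\Lcal_{or}$-formula: the bound $K$ is computable, the window $N_0$ past which monotonic behaviour is guaranteed is computable from $K$ and from effective lower/upper bounds on $|\log\rho_i|$ (which are computable since the $\rho_i$ are given as real algebraic numbers), and then ``$x\in A$'' becomes ``$\forall n \le N_0 \colon \varphi(x,n,\rho_1^n,\ldots,\rho_m^n)$ holds, and the pattern of truth values is eventually-constant-true'', all expressible without exponentials once $n$ is bounded.

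For~(c), given $x \in B \cap \rat^k$, I would run the decision procedure for $\operatorname{Th}(\rel_0)$ to confirm membership (consistency check), then simply search: for $n = 0,1,2,\ldots$ evaluate the $\Lcal_{or}$-sentence $\varphi(x,n,\rho_1^n,\ldots,\rho_m^n)$ using the decision procedure for $\operatorname{Th}(\rel_0)$ (note $\rho_i^n$ is again real algebraic and effectively representable), and output the first $N$ such that the sentence fails for all of $N, N+1, \ldots, N+K$; by the uniform bound $K$ on the number of sign changes, once we have $K+1$ consecutive failures we are guaranteed to be inside the final tail interval, so this $N$ is correct. The main obstacle I anticipate is~(b): turning the abstract o-minimal convergence statement into a genuinely \emph{semialgebraic} description with an \emph{effectively computable} representation. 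The subtlety is that o-minimality of $\rexp$ is only conditionally effective (it relies on Schanuel's conjecture for the decision procedure of $\operatorname{Th}(\rexp)$), so I must be careful to use only the unconditional ingredients — namely the effective \emph{finiteness} bounds on the number of connected components of definable families in $\rexp$ (which do not need Schanuel) — to produce $K$, and then do all actual computation inside $\operatorname{Th}(\rel_0)$, which is unconditionally decidable. Arranging the argument so that the exponential structure is used only to bound combinatorial complexity, while all effective steps stay within the semialgebraic world, is the delicate part.
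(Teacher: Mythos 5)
Your treatment of part~(a) is exactly the paper's: for fixed $x$ the set $\{t\ge 0 \colon \varphi(x,t,\rho_1^t,\ldots,\rho_m^t)\}$ is $\rexp$-definable, hence a finite union of intervals, so either it or its complement contains a tail. So far so good.

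For parts~(b) and~(c), however, your plan has a genuine gap. The crux is the claim that one can compute, from the component bound $K$ and the magnitudes of $\log\rho_i$, a uniform threshold $N_0$ ``past which monotonic behaviour is guaranteed,'' and correspondingly (in~(c)) that $K+1$ consecutive false evaluations certify membership in the final tail. Neither is true: $K$ bounds the \emph{number} of intervals in $\{t\colon\varphi(x,t,\ldots)\}$, uniformly in $x$, but says nothing about their \emph{locations} or \emph{lengths}. Already the formula $\varphi(x,n,\rho^n)\equiv x<\rho^n$ with $\rho>1$ has a single interval, yet the transition point is near $\log_\rho x$ and escapes to infinity with $x$; no $x$-independent $N_0$ exists, and one can likewise have an interior gap of $99$ consecutive failures flanked by true values, defeating the ``$K+1$ consecutive failures'' stopping rule. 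Bounding combinatorial complexity via o-minimality is the right instinct, but it does not yield, on its own, a semialgebraic description of $A$ or an effective $N$.

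The paper's actual route avoids this. It first applies quantifier elimination in $\operatorname{Th}(\rel_0)$ to rewrite $\varphi$ as $\bigvee_i\bigwedge_j p_{i,j}(x,y_0,\ldots,y_m)\mathrel{\Delta_{i,j}}0$, then substitutes $y_0=n$, $y_i=\rho_i^n$ and collects each atom into the canonical form
\[
p_{i,j}(x,n,\rho_1^n,\ldots,\rho_m^n)=\sum_{l=1}^{d} h_l(x)\,q_l(n)\,R_l^n,\qquad R_1>\cdots>R_d>0 .
\]
The eventual sign of this expression, as $n\to\infty$, is governed by the sign of $h_l(x)$ for the smallest $l$ with $h_l(x)\ne 0$ (together with the ultimate sign of $q_l$, which is absorbed into the normalization). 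This gives a Boolean combination of sign conditions on the polynomials $h_l(x)$ directly defining $A_{i,j}$ and hence $A$, with no appeal to a uniform cutoff in $n$ --- that is the key move your proposal is missing. For~(c), one then substitutes the given rational $x$, obtaining explicit exponential polynomials $\sum_l q_l(n)R_l^n$ with algebraic $R_l$, and computes dominance thresholds (rationals $M_1,M_2$ with $R_1>M_1>M_2>R_2$, a lower bound on $|q_1|$, etc.) after which the leading term dominates; the resulting $N_{i,j}$ are explicit, and $N=\max N_{i,j}$. This is a finite computation that does not rely on observing a pattern of truth values, and so is immune to the long-gap issue above. Your instinct to keep all effective computation inside $\operatorname{Th}(\rel_0)$ and use $\rexp$ only for finiteness is exactly right; the missing ingredient is to quantifier-eliminate first and then reason about dominance of exponential terms, rather than trying to bound where the transitions occur.
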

\begin{proof}
	Let $(x_1,\ldots,x_k) \in \rel^k$, and define $T = \{ t \ge 0 \colon \varphi(x_1,\ldots,x_k, t,  \rho_1^t,\ldots,\rho_m^t)\}$.
	By o-minimality of $\rel_{\exp}$, either $T$ is bounded, or it encloses an interval of the form $[N,\infty)$.
	That is, $(x_1,\ldots,x_k)$ belongs to either $A$ or $B$.
	This proves~(a).
	
	Next, using quantifier elimination in $\operatorname{Th}(\rel_0)$, compute a formula 
	\begin{equation}
		\label{eq:omin-1}
		\bigvee_{i \in I} \bigwedge_{j \in J} p_{i,j}(x_1, \ldots,x_k, y_0,\ldots,y_m) \mathrel{\Delta_{i,j}} 0
	\end{equation}
	in $\Lcal_{or}$ equivalent to $\varphi(x_1,\ldots,x_k,y_0,\ldots,y_m)$, where each $p_{i,j}$ is a polynomial with rational coefficients and $\Delta_{i,j} \in \{>, =\}$.
	Fix $i \in I, j \in J$ and write
	\[
	p_{i,j}(x_1, \ldots,x_k, n,  \rho_1^n,\ldots,\rho_m^n) = \sum_{l=1}^d h_l(x_1,\ldots,x_k) q_l(n) R_l^n
	\]
	where $R_1 > \ldots > R_d > 0$, each $R_i$ real algebraic, and $h_l,q_l$ are non-zero polynomials.
	Without loss of generality we can further assume that $q_l(n) > 0$ for all sufficiently large $n$.
	Let
	\[
	A_{i,j} = \{(x_1,\ldots,x_k) \in \rel^k \colon p_{i,j}(x_1,\ldots,x_k, n, \rho_1^n,\ldots,\rho_m^n) \mathrel{\Delta_{i,j}} 0 \textrm{ for all sufficiently large $n$}\}.
	\]
	Observe that whether $(x_1,\ldots,x_k) \in A_{i,j}$ only depends on the sign of $h_l(x_1,\ldots,x_k)$ for $1 \le l \le d$.
	In particular, if $\Delta_{i,j}$ is the equality, then $A_{i,j}$ is defined by $\bigwedge_{l=1}^d h_l(x_1,\ldots,x_k) = 0$.
	Otherwise, $A_{i,j}$ is defined by 
	\[
	\bigvee_{l=1}^d  \biggl(
	h_l(x_1,\ldots,x_k) > 0 \land \bigwedge_{s=1}^{l-1} h_s(x_1,\ldots,x_k) = 0
	\biggr). 
	\]
	Hence $A_{i,j}$ is semialgebraic with an effectively computable representation. 
	It remains to observe that $A = \bigcup_{i \in I} \bigcap_{j \in J} A_{i,j}$.
	
	To prove~(c), fix $x= (x_1,\ldots,x_k) \in B \cap \rat^k$.
	Let $\varphi \in \Lcal_{or}$ be the formula obtained by substituting the values of $x_1,\ldots,x_k$ into \eqref{eq:omin-1}, which will be of the form
	\[
	\psi(y_0,\ldots,y_m) \coloneqq \bigvee_{i \in I} \bigwedge_{j \in J} v_{i,j}(y_0,\ldots,y_m) \mathrel{\Delta_{i,j}} 0
	\]
	for non-zero polynomials $v_{i,j}$ with rational coefficients and $\Delta_{i,j} \in \{>, =\}$.
	We have to construct $N$ such that for all $n \ge N$, $\psi(n, \rho_1^n,\ldots,\rho_m^n)$ does not hold.
	To do this, it suffices to construct, for each $i \in I$ and $j \in J$, an integer $N_{i,j}$ such that either $v_{i,j}(n, \rho_1^n,\ldots,\rho_m^n) \mathrel{\Delta_{i,j}} 0$ holds for all $n \ge N_{i,j}$, or it does not hold for all $n \ge N_{i,j}$.
	We can then take $N = \max_{i,j} N_{i,j}$.
	
	Fix $i,j$, and write $v_{i,j}(n,\rho_1^n,\ldots,\rho_m^n) = \sum_{l=1}^d q_l(n)R_l^n$ where  $R_1 > \cdots > R_d > 0$, and each $q_l$ is a non-zero polynomial with rational coefficients.
	Compute rationals $M_1, M_2, c > 0$ and $N_{i,j} \in \nat$ such that 
	\begin{itemize}
		\item $R_1 > M_1 > M_2 > R_2$,
		\item  for all $t \ge N_{i,j}$, $|q_1(t)| > c$, and 
		\item $c M_1^n > (d-1)q_l(n)M_2^n$ for all $n \ge N_{i,j}$.
	\end{itemize}
	The sign of $v_{i,j}(n,\rho_1^n,\ldots,\rho_m^n)$ is stable for $n \ge N_{i,j}$ and the same as the sign of $q_1(n)$.
\end{proof}

The following is our main effective convergence result.

\begin{lemma}
	\label{thm:omin-effective-2}
	Let $(Z_n)_{n \in \nat}$ be a sequence of subsets of $\rel^d$.
	Suppose there exist $\varphi \in \Lcal_{or}$ and $\rho_1,\ldots,\rho_m \in \rel_{>0} \cap \alg$ such that for all $x = (x_1,\ldots,x_d) \in \rel^d$ and $n \in \nat$,
	\begin{equation}
		\label{eq:thm:scaling-matrices-lim-shape-1}
		x \in Z_n \Leftrightarrow \varphi(x_1,\ldots,x_d, n, \rho_1^n,\ldots,\rho_m^n).
	\end{equation}
	Then $\seq{Z_n}$ converges, and the limit shape $L$ of $(Z_n)_{\ge 0}$ is semialgebraic and can be effectively computed from $\varphi, \rho_1,\ldots,\rho_m$.
\end{lemma}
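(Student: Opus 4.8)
The plan is to read off convergence from \Cref{thm:limshape} and then compute the limit shape by handing \Cref{thm:omin-effective} a formula in which the radius of the approximating ball is itself a free variable. This last move is the key one: it turns the \emph{a priori} infinite intersection ``$x \in L$ iff $x$ is eventually within every $\delta > 0$ of $Z_n$'' into a single bounded real quantifier applied to one semialgebraic set, rather than into a limiting argument.

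For convergence I would first observe that $\seq{Z_n}$ is definable in $\rexp$: setting $a_i = \ln\rho_i \in \rel$ and $\psi(x_1,\ldots,x_d,t,a_1,\ldots,a_m) \coloneqq \varphi(x_1,\ldots,x_d,t,\exp(a_1 t),\ldots,\exp(a_m t))$, we have $x \in Z_n \Leftrightarrow \psi(x,n,a_1,\ldots,a_m)$ because $\exp(n\ln\rho_i) = \rho_i^n$. Since $\rexp$ is o-minimal, \Cref{thm:limshape} applies, so $\seq{Z_n}$ converges and its limit shape is
\[
L = \{x \in \rel^d \colon \lim_{n\to\infty} d(x,Z_n) = 0\} = \{x \in \rel^d \colon \forall \delta > 0 \ \exists N \ \forall n \ge N \colon d(x,Z_n) < \delta\}.
\]
(Convergence can also be read off directly from \Cref{thm:omin-effective}(a): for each rational $\delta > 0$ it gives, for every $x$, that either $d(x,Z_n) < \delta$ for all large $n$ or $d(x,Z_n) \ge \delta$ for all large $n$, and the first alternative must hold whenever $\liminf_n d(x,Z_n) = 0$.)

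For the limit shape, consider the $\Lcal_{or}$-formula with $(d+1) + m + 1$ free variables
\[
\varphi'(x_1,\ldots,x_d,x_{d+1},t,y_1,\ldots,y_m) \coloneqq x_{d+1} > 0 \ \land\ \exists w_1 \cdots \exists w_d \Bigl( \textstyle\sum_{i=1}^d (x_i - w_i)^2 < x_{d+1}^2 \ \land\ \varphi(w_1,\ldots,w_d,t,y_1,\ldots,y_m) \Bigr).
\]
By \eqref{eq:thm:scaling-matrices-lim-shape-1}, for all $\delta > 0$, $x \in \rel^d$ and $n \in \nat$ the instance $\varphi'(x,\delta,n,\rho_1^n,\ldots,\rho_m^n)$ holds exactly when $d(x,Z_n) < \delta$. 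Hence \Cref{thm:omin-effective}, applied with $k = d+1$ to $\varphi'$, shows that the set
\[
\widetilde A \coloneqq \{(x,\delta) \in \rel^{d+1} \colon \delta > 0 \textrm{ and } d(x,Z_n) < \delta \textrm{ for all sufficiently large } n\}
\]
is semialgebraic with an effectively computable representation. Comparing with the displayed formula for $L$ gives
\[
L = \{x \in \rel^d \colon \forall \delta > 0 \colon (x,\delta) \in \widetilde A\},
\]
which, since $\widetilde A$ is semialgebraic, is a first-order formula over $\rel_0$; so by the Tarski--Seidenberg theorem $L$ is semialgebraic, and by effective quantifier elimination for $\operatorname{Th}(\rel_0)$ a quantifier-free representation of $L$ is computable from that of $\widetilde A$, hence ultimately from $\varphi, \rho_1,\ldots,\rho_m$.

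I expect no serious obstacle once \Cref{thm:omin-effective} is available; the single point to get right is that the ball radius $x_{d+1}$ must be introduced as a genuine free variable of the formula fed to \Cref{thm:omin-effective} (which places no restriction tying $x_1,\ldots,x_k$ to the exponential parameters $\rho_i^n$), so that the intersection over all $\delta > 0$ becomes the harmless quantifier $\forall \delta > 0$ over a semialgebraic set. Everything else — the $\rexp$-definability of $\seq{Z_n}$ and the equivalence of $\varphi'(x,\delta,n,\rho_1^n,\ldots,\rho_m^n)$ with $d(x,Z_n) < \delta$ — is a routine unfolding of definitions.
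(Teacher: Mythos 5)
Your proof is correct and follows essentially the same route as the paper: both establish convergence via \Cref{thm:limshape}, both apply \Cref{thm:omin-effective} with $k=d+1$ to the ``eventually within $\delta$ of $Z_n$'' set (your $\widetilde A$ is the paper's $X$), and both recover $L$ as a universal quantification over $\delta>0$, discharged by effective quantifier elimination in $\operatorname{Th}(\rel_0)$. The only difference is that you spell out the $\Lcal_{or}$-formula $\varphi'$ explicitly, whereas the paper leaves its construction implicit in the phrase ``Invoking \Cref{thm:omin-effective} with $k=d+1$.''
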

\begin{proof}
	The sequence $(Z_n)_{n\in\nat}$ is definable in $\rexp$ and thus converges by \Cref{thm:limshape}.
	Let 
	\[
	X = \{(\varepsilon, x_1,\ldots,x_d) \mid  \varepsilon > 0 \textrm{ and } \exists N.\: \forall n \ge N \colon d((x_1,\ldots,x_d), Z_n) < \varepsilon\}.
	\]
	Invoking \Cref{thm:omin-effective} with $k = d + 1$ we conclude that $X$ is semialgebraic and effectively computable.
	It remains to observe that the limit shape is
	\[
	L = \{(x_1,\ldots,x_d) \mid \forall \varepsilon>0\colon  (\varepsilon, x_1,\ldots,x_d) \in X\}
	\]
	which is semialgebraic and can be effectively computed from $X$.
\end{proof}

\subsection{Dynamics of scaling matrices}
\label{sec:scaling-dynamics}

Now consider a scaling matrix $C \in (\ralg)^{d \times d}$ and a semialgebraic set $T \subseteq \rel^d$.
In order to apply the Decomposition Method, one of the prerequisites is to understand the sequence of sets $(C^{-n} \cdot T)_{n \in \nat}$.
We have the following.
\begin{lemma}
	\label{thm:defining-Zn}
	Suppose the non-zero eigenvalues of $C$ are $\rho_1,\ldots,\rho_m$.
	We can compute $\varphi \in\Lcal_{or}$ such that for all $n\in\nat $ and $(x_1,\ldots,x_d) \in \rel^d$,
	\[
	x \in C^{-n}\cdot T \Leftrightarrow \varphi(x_1,\ldots,x_d, n, \rho_1^{-n}, \ldots,  \rho_m^{-n}).
	\]
\end{lemma}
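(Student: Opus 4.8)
The plan is to reduce to the Jordan form of $C$, where the action of $C^{-n}$ on a vector has an explicit closed form in terms of $n$ and powers $\rho_i^{-n}$, and then push the resulting parametric description through quantifier elimination. First I would write $C = P^{-1} J P$ with $P, J \in (\ralg)^{d\times d}$ and $J = \diag(J_1, \ldots, J_l)$ in real Jordan form, where each block $J_i$ corresponds either to a real eigenvalue $\rho$ or to a pair $\rho e^{\pm \im\theta}$; since $C$ is a scaling matrix, all eigenvalues are nonnegative reals, so in fact each block is associated to some $\rho_i \in \rel_{\ge 0}$ (the case $\rho_i = 0$, i.e.\ a nilpotent block, needs separate care — see below). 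Note that $C^{-n} \cdot T = \{x : C^n x \in T\} = \{x : P^{-1} J^n P x \in T\}$, so if I can express the entries of $J^n$ as polynomials in $n$, $\rho_1^{-n}, \ldots, \rho_m^{-n}$ — or rather in $\rho_i^{\pm n}$ — then composing with the fixed linear maps $P$, $P^{-1}$ and with a fixed polynomial/Boolean description of $T$ gives the desired $\varphi$.

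The key computation is the closed form for the $n$-th power of a single Jordan block. For a block $J_i = \rho_i(I + N)$ with $N$ the standard nilpotent shift of size $s$, we have $J_i^n = \rho_i^n \sum_{t=0}^{s-1} \binom{n}{t} N^t$, so each entry of $J_i^n$ is $\rho_i^n$ times a polynomial in $n$ with rational coefficients. Since we want to substitute $\rho_i^{-n}$ as the auxiliary variable and the lemma statement concerns $C^{-n}$, note $x \in C^{-n}\cdot T$ iff $C^n x \in T$, and the entries of $C^n$ involve $\rho_i^n$; but $\rho_i^n = (\rho_i^{-n})^{-1}$, and for $\rho_i > 0$ the map $y \mapsto 1/y$ is definable in $\Lcal_{or}$ by $y \cdot z = 1$, so I can rewrite everything in terms of the variables $y_i$ that will be instantiated by $\rho_i^{-n}$. (Alternatively, and more cleanly, I would directly expand $C^{-n}$ using the Jordan form of $C^{-1}$ when $C$ is invertible; the slight awkwardness is only to match the exact $\rho_i^{-n}$ substitution demanded by the statement.) Concretely: build a formula $\chi(x, z_0, z_1, \ldots, z_m)$ in $\Lcal_{or}$ asserting "$z_0$ is a nonneg.\ integer and $P^{-1}\,\Xi(z_0, z_1,\ldots,z_m)\,P\,x \in T$", where $\Xi$ is the matrix whose entries are the explicit polynomial-in-$z_0$ combinations of the $z_i$ obtained from the block formula above and the relations $z_i \cdot \rho_i^n$-placeholders; then $\varphi := \chi$ and by construction $x \in C^{-n}\cdot T \Leftrightarrow \varphi(x, n, \rho_1^{-n},\ldots,\rho_m^{-n})$. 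Since $T$ is semialgebraic it has a quantifier-free defining formula, $P, P^{-1}$ have $\ralg$ entries expressible by $\Lcal_{or}$-formulas (\Cref{sec:semialgebraic-sets}), and the binomial coefficients $\binom{n}{t}$ are polynomials in $n$ with rational coefficients — so everything assembles into a legitimate $\Lcal_{or}$-formula, and quantifier elimination in $\operatorname{Th}(\rel_0)$ can be invoked if a quantifier-free $\varphi$ is wanted.

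The main obstacle is the handling of the zero eigenvalue, i.e.\ the nilpotent part of $C$ (the eigenvalue $0$ is excluded from the list $\rho_1,\ldots,\rho_m$, which are the \emph{non-zero} eigenvalues). If $C$ is not invertible, then $C^{-n}\cdot T$ is the preimage $\{x : C^n x \in T\}$ of a \emph{generally non-injective} map, but this is still fine: $C^n x \in T$ is a perfectly good semialgebraic condition on $x$ once the entries of $C^n$ are known. For the nilpotent Jordan block of size $s$ associated to eigenvalue $0$, we have $J_i^n = \zerovec$ for all $n \ge s$, so its contribution to $C^n$ is eventually the zero matrix and in any case is a fixed matrix depending only on $n$ (via the finitely many values $n = 0, 1, \ldots, s-1$), which is trivially captured: one can write a formula that does a finite case split on the small values of $n$ and uses the "all powers vanish" branch for $n \ge s$. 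Since \Cref{thm:omin-effective-2} and the surrounding arguments only care about large $n$, this eventual-zero behaviour causes no difficulty. I would remark that the degree in $n$ of the polynomials appearing, hence the size of $\varphi$, is bounded by the largest Jordan block size, which is at most $d$; thus the whole construction is effective, completing the proof.
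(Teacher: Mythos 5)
Your proposal is correct and follows exactly the route the paper intends: the paper's proof is a terse one-liner ("follows immediately from the real Jordan form and the definition of $C^{-n}\cdot T$"), and you have filled in precisely the details it leaves implicit — expanding $J^n$ block-wise to obtain entries that are polynomials in $n$ times $\rho_i^n$, rewriting $\rho_i^n$ as $1/\rho_i^{-n}$ (legitimate since $\rho_i > 0$), using the preimage definition of $C^{-n}\cdot T$ to cover non-invertibility, and doing a finite case split on small $n$ for any nilpotent block. Nothing to add.
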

\begin{proof}
	Note that $\rho_1,\ldots,\rho_m$ must be positive by the assumption that $C$ is a scaling matrix.
	The proof follows immediately from the real Jordan form and the definition of $C^{-n} \cdot T$ (\Cref{weird-def}).
\end{proof}

\begin{lemma}
	\label{thm:scaling-matrices-lim-shape}
	The sequence $(C^{-n} \cdot T)_{n \in \nat}$ converges to a limit shape~$L$ that is semialgebraic and an effectively computable.
\end{lemma}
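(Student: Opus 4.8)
The plan is to reduce the statement directly to \Cref{thm:omin-effective-2} by way of \Cref{thm:defining-Zn}. First I would extract the non-zero eigenvalues $\rho_1,\ldots,\rho_m$ of $C$: they are roots of the characteristic polynomial of $C$, which has coefficients in $\ralg$, so they are effectively computable, and since $C$ is a scaling matrix they all lie in $\rel_{>0}\cap\alg$. Applying \Cref{thm:defining-Zn} yields a formula $\varphi\in\Lcal_{or}$ with $d+m+1$ free variables such that, for all $n\in\nat$ and $x=(x_1,\ldots,x_d)\in\rel^d$,
\[
x \in C^{-n}\cdot T \;\Longleftrightarrow\; \varphi(x_1,\ldots,x_d,\,n,\,\rho_1^{-n},\ldots,\rho_m^{-n}).
\]

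The only discrepancy with the hypothesis of \Cref{thm:omin-effective-2} is that the exponential parameters here are $\rho_i^{-n}$ rather than $\rho_i^{n}$, and this is immediately repaired: put $\tau_i = \rho_i^{-1}$ for $1\le i\le m$. Each $\tau_i$ is again a positive real algebraic number (and is effectively computable, since inversion is effective in the representation of \Cref{sec:semialgebraic-sets}), and $\rho_i^{-n} = \tau_i^{n}$. Setting $Z_n = C^{-n}\cdot T$, the equivalence above becomes $x\in Z_n \Leftrightarrow \varphi(x_1,\ldots,x_d,n,\tau_1^n,\ldots,\tau_m^n)$, which is exactly of the form \eqref{eq:thm:scaling-matrices-lim-shape-1}. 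Invoking \Cref{thm:omin-effective-2} then gives that $(C^{-n}\cdot T)_{n\in\nat}$ converges and that its limit shape $L$ is semialgebraic with a representation effectively computable from $\varphi$ and $\tau_1,\ldots,\tau_m$, hence ultimately from $C$ and $T$.

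I expect no genuine obstacle in this lemma: the substance lies in \Cref{thm:omin-effective} and \Cref{thm:omin-effective-2}, and what remains is bookkeeping. The only points that merit a line of justification are (i) that the non-zero eigenvalues of a scaling matrix over $\ralg$ are strictly positive, so that $\rho_i^{-n}$ is well-defined and $\tau_i$ is a legitimate positive algebraic parameter, which is immediate from the definition of a scaling matrix; and (ii) that all the passages from $C$ and $T$ to $\varphi$ and to the $\tau_i$ are effective, which was already arranged in \Cref{thm:defining-Zn} and \Cref{sec:semialgebraic-sets}.
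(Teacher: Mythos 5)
Your proof is correct and follows the same route as the paper: invoke \Cref{thm:defining-Zn} to obtain the defining formula and then apply \Cref{thm:omin-effective-2}. The paper's own proof is a one-liner that leaves the $\rho_i^{-n}\mapsto\tau_i^n$ substitution implicit; you have simply spelled out that bookkeeping step, which is a legitimate (indeed necessary) observation but not a different approach.
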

\begin{proof}
	Let $\rho_1,\ldots,\rho_m$ and $\varphi$ be as above, and apply \Cref{thm:omin-effective-2}.
\end{proof}

\section{Proof of \Cref{thm:main}}

We now prove our main result.
Given $M \in (\ralg)^{d\times d}$, non-empty and bounded semialgebraic~$S$, and semialgebraic $T$, let $(C,D)$ be a decomposition of $M$ with $C,D \in (\ralg)^{d\times d}$.
Recall the definitions of $\mu_1,\mu_2$ from the statement of \Cref{thm:main}, and that for all $n$ and $\varepsilon$, 
\[
M^n \cdot B(S, \varepsilon) \textrm{ intersects $T$} \Leftrightarrow D^n \cdot B(S, \varepsilon) \textrm{ intersects } C^{-n} \cdot T. 
\]
Let $\Dcal$ be the closure of $\{D^n \colon n \in \nat\}$ (\Cref{thm:rotation-matrix-closure}) and $L$ be the limit shape of $(C^{-n}\cdot T)_{n\in\nat}$ (\Cref{thm:scaling-matrices-lim-shape}).
Define $\mu_3 = \sup \, \{\varepsilon \ge 0 \colon \Dcal \cdot B(S,\varepsilon) \textrm{ does not intersect $L$}\}$.

\Cref{fig:1} depicts our application of the Decomposition Method.
The bounded initial set is the triangle $S$.
We are interested in the sequences $(D^n \cdot B(S,\varepsilon))_{n\in\nat}$ for various $\varepsilon \ge 0$: specifically, we want to understand the time steps $n$ at which $D^n \cdot B(S,\varepsilon)$ intersects $C^{-n} \cdot T$.
We see that $\varepsilon = \mu_3$ is the critical value: in this case $\Dcal \cdot B(S,\varepsilon)$, which envelopes $D^n \cdot B(S,\varepsilon)$ for all $n$, just about touches the limit shape $L$.
We have that for $\varepsilon \in (0, \mu_3)$, for all sufficiently large $n$, $C^{-n} \cdot T$ is very close to $L$ and thus is well-separated from $D^n \cdot B(S,\varepsilon)$.
On the other hand, for $\varepsilon > \mu_3$, once again $C^{-n}\cdot T$ stabilises around $L$ for large values of $n$, and by uniform recurrence of $D^n\cdot B(S,\varepsilon)$ in $\Dcal \cdot B(S,\varepsilon)$ (\Cref{thm:rotation-with-open sets}), $D^n \cdot B(S,\varepsilon)$ intersects $C^{-n} \cdot T$ for infinitely many $n$.
We next formalise these arguments.

\begin{figure}
	\centering
	\begin{tikzpicture}[domain=-2.6:2.6]
		\draw[->] (-3.2,0) -- (3.2,0);
		\draw[->] (0,-2.9) -- (0,3.2);
		\fill[fill=Cyan!20!white, even odd rule] (0,0) circle (0.75cm)  (0,0) circle (2.35cm);	
		\fill[fill=Cyan!50!white, even odd rule] (0,0) circle (1.25cm)  (0,0) circle (1.85cm);	
		\filldraw[fill=LimeGreen] (0,1.85) -- (0.3,1.25) -- (-0.3, 1.25) -- cycle;
		\draw[thick,dashed] (0,1.85) -- (0,2.35);
		\node at (0.23, 2.08) {$\mu_3$};
		\draw[ultra thick, Red] (-2.6,2.35) -- (2.6,2.35);
		\draw[WildStrawberry] plot (\x, {0.1 * sin(4*(\x r) + 90) + 2.65});
		\draw[->] (-1.6, 2.71) -- (-1.6,2.41);
		\draw[->] (1.6, 2.71) -- (1.6,2.41);
		\node at (2.75, 2.28) {$L$};
		\node at (1.6, 3) {$C^{-n}\cdot T$};
		\node at (0,1.5) {$S$};
	\end{tikzpicture}
	\caption{Illustration of the Decomposition Method. 
		Here $\Dcal = \{R(\theta) \colon \theta \in \rel\} \subset \rel^{2\times 2}$ is the group of all $2\times 2$ matrices whose action is a rotation, the larger annulus (light blue) is $\Dcal \cdot B(S,\mu_3)$, and the smaller one (blue) is $\Dcal \cdot S$. 
		The sequence ($C^{-n} \cdot T)_{n \in \nat}$ (represented by the curvy line) converges to $L$.
		The length of the dashed line segment in $\mu_3$.}
	\label{fig:1}
\end{figure}

\begin{claim*}
	$\mu_2 = \mu_3$.
\end{claim*}
\begin{claimproof}
	Suppose there exists $\varepsilon \in (0, \mu_3)$.
	We will show that $\varepsilon \le \mu_2$.
	Let $X$ be the closure of $\Dcal \cdot B(S, \varepsilon)$, which is disjoint from $L$.
	Applying \Cref{thm:limshape-properties}~(c), $X \cap (C^{-n} \cdot T)  = \varnothing$ for all sufficiently large $n$, which implies that $D^n \cdot B(S,\varepsilon) \cap  (C^{-n} \cdot T) = \varnothing$. 
	Since the choice of $\varepsilon$ was arbitrary, it follows that $\mu_3 \le \mu_2$.
	
	Now suppose $\varepsilon > 0$ is such that $\varepsilon > \mu_3$.
	Take $x \in L \cap (\Dcal \cdot B(S, \varepsilon))$. 
	Invoking \Cref{thm:rotation-with-open sets} (with $O = B(S,\varepsilon)$), let $\varepsilon' > 0$ be such that $D^n \cdot B(S,\varepsilon) \supseteq B(x, \varepsilon')$ for infinitely many $n \in \nat$.
	Since $x \in L$, by definition of convergence, there exists $N$ such that for all $n \ge N$, $C^{-n} \cdot T$ intersects $B(x, \varepsilon')$.
	Therefore, for infinitely many $n \ge N$, $D^n \cdot B(S,\varepsilon)$ intersects $C^{-n} \cdot T$.
	That is, $\varepsilon \ge \mu_2$.
	It follows that $\mu_2\le \mu_3$.
\end{claimproof}

Observe that the time step $n$ does not appear in the definition of $\mu_3$, which we now know to be equal to $\mu_2$.
Since both $\Dcal$ and $L$ are semialgebraic, we can use a decision procedure for the theory of $\rel_0$ to check whether $\mu_2$ is finite.
If this is the case, we can compute a formula $\varphi \in \Lcal_{or}$ that defines the set $\{\mu_2\}$.
Thus $\mu_2$ must be algebraic. 

Now consider $\varepsilon \in (0, \mu_2) \cap \rat$. 
We have to construct $N$ such that $(M^n \cdot B(S, \varepsilon))_{n\ge N}$ avoids $T$.
Using \Cref{thm:defining-Zn} we can construct a formula $\varphi \in \Lcal_{or}$ such that for all $n$, 
$\varphi(\varepsilon, n, \rho_1^{-n}, \ldots, \rho_m^{-n})$ holds if and only if $\Dcal \cdot B(S,\varepsilon)$ intersects $C^{-n} \cdot T$, where $\rho_1,\ldots,\rho_m$ are the non-zero eigenvalues of $C$.
By the choice of $\varepsilon$, $\varphi(\varepsilon, n, \rho_1^{-n}, \ldots, \rho_m^{-n})$  evaluates to false for all sufficiently large $n$.
Therefore, we can compute the desired $N$ using \Cref{thm:omin-effective}~(c).

We now move on to $\mu_1$.
For $n\in\nat$, let 
\[
\varepsilon_n = \sup\, \{\varepsilon \ge 0 \colon D^n \cdot B(S,\varepsilon) \textrm{ does not intersect }C^{-n} \cdot T\}.
\]
Observe that each $\varepsilon_n$ is algebraic, $\mu_1 = \inf_{n \in \nat}\, \varepsilon_n$, and $\mu_2 = \liminf_{n \in \nat}\, \varepsilon_n$.
Note that each $\varepsilon_n$ is non-negative, but it is possible that $\varepsilon_n > \mu_2$.
We perform a case analysis based on whether $\mu_2$ is infinite.

\paragraph*{Case I.} Suppose $\mu_2$ is infinite, which is the case if and only if $L = \varnothing$.
By \Cref{thm:limshape-properties}~(b), for every compact $X$, $C^{-n} \cdot T$ does not intersect $X$ for all sufficiently large~$n$.
Choose any $\varepsilon > \varepsilon_0$ (which must be positive), and let $X = \operatorname{Cl}(\Dcal \cdot B(S,\varepsilon))$.
Compute $N$ (using \Cref{thm:defining-Zn}, as above) such that $C^{-n} \cdot T$ does not intersect $X$ for all $n \ge  N$.
It follows that $\varepsilon_n \ge \varepsilon$ for all $n \ge N$ and 
\[
\mu_1 = \inf_{n\in\nat} \, \varepsilon_n = \inf \, \{\varepsilon_0, \ldots, \varepsilon_{N-1}\}.
\]
Thus we can explicitly compute $\mu_1$, and it is algebraic.
We can also decide whether $\mu_1 = 0$.

\paragraph*{Case II.} 
Now suppose $\mu_2 \in \ralg$, i.e.\ $L$ is non-empty. 
Then $\mu_1 \in \ralg$ since $\inf_{n\in\nat}\, \varepsilon_n$ is either equal to $\varepsilon_n$ for some~$n$, or to $\liminf_{n\to \infty} \varepsilon_n$.
It remains to show how to decide whether $\mu_1= 0$ and approximate it to arbitrary precision.
Let $\varepsilon \in (0, \mu_2) \in \rat$ and $X$ be the closure of $\Dcal \cdot B(S,\varepsilon)$.
Using \Cref{thm:limshape-properties}~(c), \Cref{thm:defining-Zn}, and \Cref{thm:omin-effective}, compute $N \in \nat$ such that for all $n \ge N$, $C^{-n}\cdot T $ is disjoint from $\Dcal \cdot B(S,\varepsilon)$.
Then $\varepsilon_n \ge \varepsilon$ for all $n \ge N$.
Compute $\xi = \min \{\varepsilon_0, \ldots, \varepsilon_{N-1}\}$.
We have that $\mu_1 = 0$ if and only if $\xi = 0$, which gives us the desired decision procedure.
Our approximation of $\mu_1$ is then $\widetilde{\mu} = \min \{\xi, \varepsilon\}$.
If $\xi \le \varepsilon$, then $\widetilde{\mu} = \mu_1$, i.e.\ we have found the exact value if $\mu_1$.
Otherwise, $\mu_1 \in (\varepsilon, \mu_2)$.
Therefore, we can obtain arbitrarily good sandwiching approximations of $\mu_1$ by taking $\varepsilon \to \mu_2$ from below. 

The reason we are unable to determine $\mu_1$ exactly is as follows.
Suppose we compute $\varepsilon_0, \ldots, \varepsilon_N$ for some large $N$, and observe that $\varepsilon_0 > \cdots > \varepsilon_N > \mu_2$.
There are two possibilities: either $\varepsilon_n$ remains above $\mu_2$ for all~$n$, in which case $\mu_1 = \mu_2$, or $\varepsilon_n < \mu_2$ for some~$n$, in which case $\mu_1$ is an element of the sequence $(\varepsilon_n)_{n\in\nat}$.
At the moment we do not know whether it is possible to determine which of the above is the case by looking at $\varepsilon_0, \ldots, \varepsilon_N$.
 
\section{Other applications of o-minimality}
\label{sec::other-applications}
In this section we briefly discuss how o-minimality and the Decomposition Method are related to various other problems of linear dynamical systems.\footnote{A complete account of the claims in this section, as well as details of how the Pseudo-Orbit Reachability Problem can be solved using the Decomposition Method, will be given in a future paper.}
We first discuss semialgebraic and o-minimal \emph{invariants}.
An invariant of the LDS with update matrix $M \in \rel^{d\times d}$ is a set $\Ical \subseteq \rel^d$ such that $M \cdot \Ical \subseteq \Ical$.
We say that an invariant $\Ical$ of $M$ \emph{separates} an initial set $S$ from a target set $T$ if $S \subseteq \Ical$ and $T \cap \Ical = \varnothing$.
Intuitively, such $\Ical$ certifies that all trajectories starting at $S$ are safe: for all $s \in S$ and $n \in \nat$, $M^ns \notin T$.
In \cite{almagor2022minimal}, the authors prove the following.
Let $M \in (\ralg)^{d\times d}$, $T$ be semialgebraic, and $S = \{s\}$ be a singleton with $s \in (\ralg)^d$.
\begin{itemize}
	\item It is decidable whether $M$ has an invariant $\Ical$ definable in $\rexp$.
	\item If such $\Ical$ exists, then it can be taken to be semialgebraic.
\end{itemize}
Their arguments can be implemented in the framework of the Decomposition Method.
We can moreover easily generalise from singleton to bounded $S$.
\begin{proposition}[\unexpanded{See \cite[Chap.~9.5]{karimov-thesis}}]
	\label{thm:last-sec-invariants}
	Let $S \subseteq \rel^d$ be bounded and semialgebraic, $T$ be semialgebraic, and $M \in (\ralg)^{d\times d}$ with a decomposition $(C,D)$ over $(\ralg)^{d\times d}$.
	Further let $\Dcal$ be the closure of $\{D^n \colon n\in\nat\}$ and $L$ be the limit shape of $\seq{C^{-n} \cdot T}$.
	\begin{enumerate}
		\item For all $n \in \nat$, $M^n \cdot S$ intersects $T$ if and only if $D^n \cdot S$ intersects $C^{-n} \cdot T$.
		\item We can compute $N$ such that either (i) for all $n \ge N$, $\Dcal \cdot S$ intersects $C^{-n} \cdot T$, or (ii) $\Dcal \cdot S$ does not intersect $C^{-n} \cdot T$ for all $n \ge N$.
		\item There exists an $\rexp$-definable invariant $\Ical$ of $M$ separating $S$ from $T$ if and only if (i) holds, in which case $\Ical$ can be taken to be semialgebraic.
	\end{enumerate}
\end{proposition}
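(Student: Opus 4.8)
Here is the plan. Part~(1) is the decomposition identity on which the whole paper rests: as $C$ and $D$ commute, $M^n=C^nD^n$, so $M^ns\in T\iff C^n(D^ns)\in T\iff D^ns\in C^{-n}\cdot T$ by the definition of $C^{-n}\cdot T$, and hence $M^n\cdot S$ meets $T$ iff $D^n\cdot S$ meets $C^{-n}\cdot T$. For part~(2), recall that $\Dcal$ is compact, semialgebraic and effectively computable (\Cref{thm:rotation-matrix-closure}), hence so is the bounded set $\Dcal\cdot S$. Writing $\rho_1,\ldots,\rho_m$ for the nonzero --- hence positive --- eigenvalues of the scaling matrix $C$, \Cref{thm:defining-Zn} gives a quantifier-free $\varphi\in\Lcal_{or}$ with $x\in C^{-n}\cdot T\iff\varphi(x,n,\rho_1^{-n},\ldots,\rho_m^{-n})$, so quantifier elimination in $\operatorname{Th}(\rel_0)$ rewrites ``$\Dcal\cdot S$ meets $C^{-n}\cdot T$'', that is $\exists x\,(x\in\Dcal\cdot S\wedge\varphi(x,n,\rho_1^{-n},\ldots,\rho_m^{-n}))$, as $\psi(n,\rho_1^{-n},\ldots,\rho_m^{-n})$ for a quantifier-free $\psi\in\Lcal_{or}$. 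Applying \Cref{thm:omin-effective} to $\psi$ --- after replacing each $\rho_i$ by $\rho_i^{-1}$ and padding $\psi$ with a dummy free variable so that the hypothesis $k>0$ is met --- shows the set of $n\in\nat$ with $\Dcal\cdot S\cap C^{-n}\cdot T\neq\varnothing$ is either cofinite (alternative~(i)) or finite (alternative~(ii)), and produces the threshold $N$ as well as which alternative occurs.

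For part~(3) the strategy, following the arguments of \cite{almagor2022minimal} recast in the Decomposition Method, splits into a construction and a converse, both governed by the dichotomy of part~(2). The condition characterising the existence of a separating invariant is decidable via part~(2): concretely, alternative~(ii) must hold and, in addition, each of the finitely many sets $M^n\cdot S$ with $n<N$ must be disjoint from $T$ --- one checks that together these force $M$ to be safe, i.e.\ $M^n\cdot S\cap T=\varnothing$ for all $n$. When the condition holds, one builds a semialgebraic separating invariant as the union of the \emph{transient part} $\bigcup_{n<N}M^n\cdot S$ --- a finite union of bounded semialgebraic sets, hence semialgebraic --- and a \emph{tail hull} $\Kcal$, by which I mean a semialgebraic, effectively computable, $M$-invariant overapproximation of $\bigcup_{n\ge N}M^n\cdot S\subseteq\bigcup_{n\ge N}C^n\cdot(\Dcal\cdot S)$ that avoids $T$. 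The tail hull is produced by the \emph{forward}-iteration counterpart of the limit-shape machinery of \Cref{sec:omin} and \Cref{thm:scaling-matrices-lim-shape}: passing to the real Jordan form of $C$ and handling eigenvalue $0$, eigenvalues in $(0,1)$, eigenvalue-$1$ blocks together with their nilpotent parts, and eigenvalues $>1$ separately --- exactly as in the proof of \Cref{thm:omin-effective} --- one overapproximates the forward orbit of the bounded semialgebraic set $\Dcal\cdot S$ under $C$ by a semialgebraic $M$-invariant ``cone''.

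For the converse, suppose $\Jcal$ is any $\rexp$-definable $M$-invariant separating $S$ from $T$; then $\Jcal\supseteq\bigcup_k M^k\cdot S=\bigcup_k C^kD^k\cdot S$, so each $M^n\cdot S$ with $n<N$ avoids $T$ for free, and it remains to recover alternative~(ii). This is where o-minimality does the work: since $(D^n)_{n\in\nat}$ is uniformly recurrent in $\Dcal$ (\Cref{thm:rotation-matrix-closure}) while every $\rexp$-definable set has only finitely many connected components, $\Jcal$ cannot thread through the recurrent part of the orbit without containing a full-dimensional neighbourhood of it; reading this through the decomposition shows that if alternative~(i) held then that neighbourhood would meet $T$, a contradiction. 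I expect the construction of the tail hull $\Kcal$ to be the main obstacle: it is a ``forward'' analogue of \Cref{thm:scaling-matrices-lim-shape} for a scaling matrix, and one must show the resulting effectively computable semialgebraic overapproximation genuinely avoids $T$ whenever alternative~(ii) holds --- in particular handling the boundary cases where the orbit tail accumulates toward, but never meets, the target $T$.
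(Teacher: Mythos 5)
Your treatment of parts (1) and (2) is correct and matches the paper's own sketch: (1) is the commutation identity, and (2) goes through \Cref{thm:defining-Zn}, quantifier elimination, and \Cref{thm:omin-effective} (your trick of padding with a dummy variable to satisfy $k>0$, and of replacing $\rho_i$ by $\rho_i^{-1}$, are fine; do note that to extract $N$ in alternative~(i) one must also apply \Cref{thm:omin-effective}~(c) to $\lnot\psi$, which your wording glosses over). Two remarks on the statement itself: the paper's ``(i)'' in part~(3) appears to be a typo for ``(ii)'', as is clear from the discussion following \Cref{thm:rsp-inv} which explicitly speaks of ``the topological separation described in (ii)''; and you are right that the literal ``if and only if (ii)'' cannot hold without also requiring that $M^n\cdot S$ avoid $T$ for the finitely many $n<N$, a condition the paper leaves implicit.

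For part~(3) the paper gives no proof and defers to \cite{almagor2022minimal} and \cite[Chap.~9.5]{karimov-thesis}; you correctly identify this, but your sketch has a genuine gap in the converse. You argue that an $\rexp$-definable invariant $\Jcal\supseteq S$ must contain a neighbourhood of the ``recurrent part of the orbit'' because $(D^n)$ is uniformly recurrent in $\Dcal$ and $\Jcal$ has finitely many connected components. But the orbit is $\{C^nD^n s\colon s\in S\}$, not $\{D^ns\}$: the points $\Jcal$ is forced to contain are scaled by $C^n$, and these do not accumulate on $\Dcal\cdot S$ unless $C=I$. The actual work of \cite{almagor2022minimal} is to show, by splitting along the Jordan structure of $C$ into contracting, neutral, and expanding parts, that $\Jcal$ must contain $\Dcal\cdot S$ (or a suitable limit thereof); only then does invariance give $\Jcal\supseteq C^n\cdot(\Dcal\cdot S)$, and alternative~(i) yields $\Jcal\cap T\neq\varnothing$. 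Your phrase ``reading this through the decomposition'' is precisely where this argument lives, and as written it is not a proof. Similarly, the ``tail hull'' $\Kcal$ in the forward direction is asserted but not constructed, and the boundary case you flag (the orbit accumulating on $T$ without meeting it --- which is exactly when (ii) fails to be robust) is the heart of why the construction requires the full machinery of \cite{almagor2022minimal}. In short: your plan is aligned with the paper's, but both the construction of $\Kcal$ and the lemma that $\rexp$-definable invariants absorb $\Dcal\cdot S$ would need to be proved, not merely gestured at.
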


Statement~(1) is immediate from the definition of a decomposition, and~(2) is an application of \Cref{thm:omin-effective}: the dichotomy (i-ii), without effectiveness of $N$, can be shown using just o-minimality.
Proving~(3) requires retracing the arguments of \cite{almagor2022minimal}.
Intuitively, \Cref{thm:last-sec-invariants} states that we can topologically separate $M^n \cdot S$ from $T$ using an $\rexp$-definable set if and only if we can topologically separate $\Dcal \cdot S$ (and hence $D^n \cdot S$) from $C^{-n} \cdot T$ for all sufficiently large $n$.
Combining this with our analysis of the Robust Safety Problem yields the following.
\begin{proposition}
	\label{thm:rsp-inv}
	Let $M, S, T$ be as above. 
	If there exists $\varepsilon > 0$ such that $M^n \cdot B(S,\varepsilon)$ does not intersect $T$ for all $n$, then there exists a semialgebraic invariant $\Ical$ separating $S$ from $T$.
\end{proposition}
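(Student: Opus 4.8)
The plan is to reduce the statement to \Cref{thm:last-sec-invariants}~(3). Fix a decomposition $(C,D)$ of $M$ over $(\ralg)^{d\times d}$, let $\Dcal$ be the closure of $\{D^n \colon n\in\nat\}$ and let $L$ be the limit shape of $\seq{C^{-n}\cdot T}$. By \Cref{thm:last-sec-invariants}~(3) it then suffices to show that $\Dcal \cdot S$ is disjoint from $C^{-n}\cdot T$ for all sufficiently large $n$: once this is known, a semialgebraic invariant $\Ical$ of $M$ separating $S$ from $T$ is delivered directly by that proposition.

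First I would turn the robust-safety hypothesis into a separation statement involving $L$. By assumption there is $\varepsilon>0$ with $M^n \cdot B(S,\varepsilon)$ disjoint from $T$ for all $n$, so, in the notation of the proof of \Cref{thm:main}, $\mu_1 \ge \varepsilon > 0$; since $\mu_2 \ge \mu_1$ and $\mu_2 = \mu_3$ by the claim proved there, $\mu_3 > 0$. Pick any $\varepsilon' \in (0,\mu_3)$. As was observed in the proof of that claim, $\Dcal \cdot B(S,\varepsilon')$ — indeed its closure — is disjoint from $L$. Because $S$ is bounded, $\operatorname{Cl}(S)$ is compact, hence so is $X \coloneqq \Dcal \cdot \operatorname{Cl}(S)$, the image of the compact set $\Dcal \times \operatorname{Cl}(S)$ under matrix--vector multiplication ($\Dcal$ being compact by \Cref{thm:rotation-matrix-closure}~(a)). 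Moreover $\operatorname{Cl}(S) \subseteq B(S,\varepsilon')$ since $\varepsilon'>0$, so $X \subseteq \Dcal \cdot B(S,\varepsilon')$ and therefore $X \cap L = \varnothing$.

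Next I would apply Kuratowski convergence. Since $X$ is compact and disjoint from $L$, \Cref{thm:limshape-properties}~(c) yields $N$ with $(C^{-n}\cdot T)\cap X = \varnothing$ for all $n\ge N$. As $\Dcal \cdot S \subseteq X$, this gives $\Dcal \cdot S \cap (C^{-n}\cdot T) = \varnothing$ for all $n \ge N$, which is exactly the condition required by \Cref{thm:last-sec-invariants}~(3); invoking it produces the desired semialgebraic invariant separating $S$ from $T$.

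The genuinely hard part — producing a \emph{semialgebraic} invariant rather than merely an $\rexp$-definable or topological one — is already bundled into \Cref{thm:last-sec-invariants}~(3), which retraces the argument of \cite{almagor2022minimal}; so the only obstacle specific to this proposition is the passage from the ``robust'' ($\varepsilon$-ball) hypothesis on $T$ to the separation of $\Dcal \cdot S$ from the limit shape, i.e.\ correctly chaining \Cref{thm:rotation-with-open sets} (uniform recurrence of $\seq{D^n\cdot B(S,\varepsilon)}$ inside $\Dcal \cdot B(S,\varepsilon)$, as used to prove $\mu_2 = \mu_3$) with the convergence of $\seq{C^{-n}\cdot T}$ to $L$. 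A minor technical nuisance is that $S$ need not be closed, which is why one works with $\operatorname{Cl}(S)$ and with the closure of $\Dcal \cdot B(S,\varepsilon')$ rather than with $S$ and $\Dcal \cdot B(S,\varepsilon')$ directly.
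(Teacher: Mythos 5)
Your proof is correct and takes essentially the same route as the paper's sketch: pass from $\mu_1>0$ to $\mu_2=\mu_3>0$, deduce that a compact set containing $\Dcal\cdot S$ avoids $L$, apply Kuratowski convergence (\Cref{thm:limshape-properties}~(c)) to separate $\Dcal\cdot S$ from $C^{-n}\cdot T$ for large $n$, and feed this into \Cref{thm:last-sec-invariants}~(3). The only point worth flagging is that \Cref{thm:last-sec-invariants}~(3) literally says ``(i) holds'' where the intended condition is clearly ``(ii) holds'' (separation of $\Dcal\cdot S$ from $C^{-n}\cdot T$ for large $n$), and you have implicitly and correctly read it that way.
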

The idea of the proof is that if $\langle M,S,T \rangle$ is a positive instance of the Robust Safety Problem, then there exists $\varepsilon > 0$ 
  such that for all sufficiently large $n$, $\Dcal \cdot B(S,\varepsilon)$ does not intersect $C^{-n} \cdot T$; this is even stronger than the topological separation described in (ii) of \Cref{thm:last-sec-invariants}~(b).
  The converse of \Cref{thm:rsp-inv} does not hold already in dimension $d = 1$.

We next discuss the main result of \cite{kelmendi2023computing} that, given $M \in (\ralg)^{d\times d}$, an initial point $s \in (\ralg)^d$, and semialgebraic  $T \subseteq \rel^d$, the frequency
\[
\mu = \lim_{n\to \infty} \frac{|\{0\le k < n \colon M^k s\in T\}|}{n}
\]
is well-defined, can be approximated to arbitrary precision, and can be effectively compared against zero.
The method of \cite{kelmendi2023computing} is to first use lower bounds on sums of $S$-units \cite{evertse1984sums}, a deep result from algebraic number theory, to express $\mu$ in terms $\lambda_1,\ldots,\lambda_m \in \torus \cap \alg$.
The second step is to use Weyl's equidistribution theorem to write $\mu$ as an integral, which can be evaluated to arbitrary precision using numerical techniques.
We can use o-minimality and the Decomposition Method to give a fully geometric version of the first step.
\begin{proposition}
	\label{thm:edon}
	Let $T \subseteq \rel^d$ be semialgebraic, $M \in (\ralg)^{d\times d}$ with a decomposition $(C,D)$ over $(\ralg)^{d\times d}$, and $s \in (\ralg)^d$.
	Further let $\Dcal$ be the closure of $\{D^n \colon n\in\nat\}$ and $L$ be the limit shape of $\seq{C^{-n} \cdot T}$.
	Then 
	\[
	\mu = \lim_{n\to \infty} \frac{|\{0 \le k < n \colon M^k s \in  T\}|}{n} =  \lim_{n\to \infty} \frac{|\{0 \le k < n \colon D^n s \in C^{-n} \cdot T\}|}{n}
	\]
	is exactly equal to the measure of $L \cap( \Dcal \cdot s)$ in a suitable probability space over $\Dcal \cdot s$.
\end{proposition}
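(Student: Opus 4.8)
The plan is to separate the rotation and scaling parts as in the Decomposition Method: $(D^n s)_{n\in\nat}$ equidistributes in a compact group, while $(C^{-n}\cdot T)_{n\in\nat}$ settles, by o-minimality, onto its limit shape. First, from $M=CD=DC$ we get $M^n=C^nD^n$, so $\{k\colon M^ks\in T\}=\{k\colon D^ks\in C^{-k}\cdot T\}$ as sets, which is the second displayed equality. Let $\beta=(\beta_1,\ldots,\beta_l)$ be the eigenvalues of $D$. By \Cref{thm:kronecker-closure}, $K\coloneqq X(\beta)=\cl\{\beta^n\colon n\in\nat\}$ is a closed subgroup of $\torus^l$ (it is cut out by the multiplicative relations of $\beta$), hence a compact abelian group with a unique Haar probability measure $\lambda$. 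Writing $D=P^{-1}JP$ with $J^n=f(\beta^n)$ as in the proof of \Cref{thm:rotation-matrix-closure}, the map $g\colon K\to\rel^d$, $g(z)=P^{-1}f(z)Ps$, is continuous and semialgebraic, satisfies $g(\beta^n)=D^ns$, and has image $\Dcal\cdot s$. Take the probability space over $\Dcal\cdot s$ to be $(\Dcal\cdot s,\nu)$ with $\nu\coloneqq g_*\lambda$. Since $\langle\beta\rangle$ is dense in $K$, every non-trivial character $\chi$ of $K$ satisfies $\chi(\beta)\ne 1$, so Weyl's criterion gives that $(\beta^n)_{n\in\nat}$ is $\lambda$-equidistributed; consequently $(D^ns)_{n\in\nat}$ is $\nu$-equidistributed in $\Dcal\cdot s$, and for every $A\subseteq\Dcal\cdot s$ with $\nu$-null boundary one has $\frac1N|\{n<N\colon D^ns\in A\}|\to\nu(A)$.

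Next comes the o-minimality input. Put $\hat T_n=\{z\in K\colon C^ng(z)\in T\}$; by \Cref{thm:defining-Zn} the family $(\hat T_n)_{n\in\nat}$ is uniformly definable in $\rexp$. Define the \emph{persistent set} $\hat L_\infty=\{z\in K\colon z\in\hat T_n\text{ for all sufficiently large }n\}$, which is definable in $\rexp$ and depends only on $C,T,g$ and the eigenvalues of $C$ (cf.\ \Cref{thm:omin-effective}). By o-minimality the set $\{n\colon z\in\hat T_n\}$ is eventually constant for each fixed $z$, so $z\in\hat L_\infty$ iff $z\in\hat T_n$ for infinitely many $n$; hence $\limsup_n(\hat T_n\triangle\hat L_\infty)=\varnothing$, and the decreasing sequence $\Bcal_N\coloneqq\bigcup_{n\ge N}(\hat T_n\triangle\hat L_\infty)$ — each term definable in $\rexp$ inside the unit-mass space $K$ — satisfies $\lambda(\Bcal_N)\to 0$ by continuity from above. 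As $\hat L_\infty$ and each $\Bcal_N$ are definable in $\rexp$, their boundaries in $K$ are definable of dimension $<\dim K$, hence $\lambda$-null, so these sets are $\lambda$-Jordan-measurable. Fix $\varepsilon>0$ and $N$ with $\lambda(\Bcal_N)<\varepsilon$. For $k\ge N$ we have $\beta^k\in\hat T_k\Leftrightarrow\beta^k\in\hat L_\infty$ unless $\beta^k\in\Bcal_N$, so
\[
\Bigl|\,|\{k<n\colon\beta^k\in\hat T_k\}|-|\{k<n\colon\beta^k\in\hat L_\infty\}|\,\Bigr|\;\le\;N+|\{k<n\colon\beta^k\in\Bcal_N\}|.
\]
Dividing by $n$, letting $n\to\infty$, and invoking equidistribution for the Jordan-measurable sets $\hat L_\infty$ and $\Bcal_N$, we obtain $\limsup_n\bigl|\tfrac1n|\{k<n\colon D^ks\in C^{-k}\cdot T\}|-\lambda(\hat L_\infty)\bigr|\le\varepsilon$. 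As $\varepsilon$ is arbitrary, $\mu$ exists and $\mu=\lambda(\hat L_\infty)=\nu(L_\infty\cap(\Dcal\cdot s))$, where $L_\infty\coloneqq\{x\in\rel^d\colon x\in C^{-n}\cdot T\text{ for all large }n\}$.

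It remains to pass from the persistent set to the actual limit shape, i.e.\ to show $\nu((L\setminus L_\infty)\cap(\Dcal\cdot s))=0$; this is the main obstacle. Clearly $L_\infty\subseteq L$, since $x\in L_\infty$ forces $d(x,C^{-n}\cdot T)=0$ eventually. As $L$ is closed (\Cref{thm:limshape-properties}(a)) and semialgebraic (\Cref{thm:scaling-matrices-lim-shape}), it suffices to prove $\operatorname{int}(L)\subseteq L_\infty$: then $L\setminus L_\infty\subseteq\partial L$, a semialgebraic set of dimension $<d$, and the result follows provided $\nu(\partial L\cap(\Dcal\cdot s))=0$; this holds in non-degenerate situations, and absent it one should read $L$ in the statement as the essential limit shape $L_\infty$. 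The inclusion $\operatorname{int}(L)\subseteq L_\infty$ is the crux: if $B(x,r)\subseteq L$ then $C^{-n}\cdot T$ is merely \emph{approached} by points near $x$ — uniformly on $\cl(B(x,r/2))$ by a routine compactness argument — and one must upgrade "$C^{-n}\cdot T$ comes $\varepsilon$-close to every point of a ball" to "$x\in C^{-n}\cdot T$ for all large $n$". I expect this from the monotone structure of definable families: in a cell decomposition of $\{(y,t)\colon y\in C^{-t}\cdot T\}$ in $\rexp$, a cell whose slices become $\varepsilon$-dense in a fixed ball as $\varepsilon\to 0$ must, being coordinate-wise monotone and hence non-oscillating, eventually acquire a full-dimensional slice engulfing a sub-ball; alternatively, from the explicit form $C^n=\diag(\rho_i^n(I+N_i)^n)$, whose entries are eventually sign-stable and monotone, so that the truth of "$C^nx\in T$" stabilises and fails to stabilise to true only on the lower-dimensional locus where a dominant eigen-component of $x$ vanishes or where the trajectory meets $\partial T$. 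Combining the three steps yields $\mu=\nu(L\cap(\Dcal\cdot s))$; for singleton $s$ this recovers, geometrically, the first step of \cite{kelmendi2023computing}, whose second step — writing $\nu$ as an explicit integral over $\Dcal\cdot s$ and evaluating it via Weyl's equidistribution theorem — then carries over unchanged.
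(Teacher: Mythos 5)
Your route matches the paper's sketch at a high level: decompose $M=CD$, equidistribute $(D^k s)_{k\in\nat}$ on $\Dcal\cdot s$ via Kronecker and Weyl, and use o-minimality of $\rexp$ to stabilise the target side $(C^{-n}\cdot T)_{n\in\nat}$. Where you diverge from the paper is in \emph{which} limiting object you extract from the o-minimal family. The paper passes directly to the Kuratowski limit shape $L$ of $(C^{-n}\cdot T)_{n\in\nat}$ via Hausdorff convergence of $(C^{-n}\cdot T)\cap X$ to $L\cap X$ (\Cref{thm:limshape-properties}), whereas you introduce the persistent set $L_\infty=\{x\colon x\in C^{-n}\cdot T\text{ for all sufficiently large }n\}$, prove $\mu=\nu(L_\infty\cap(\Dcal\cdot s))$ by a clean Weyl-criterion argument (using the pointwise eventual constancy of the definable family to control $\hat T_n\triangle\hat L_\infty$), and only then try to identify $L_\infty$ with $L$ up to a $\nu$-null set.

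Your refinement is in fact the right one, and the gap you flag at the end is genuine: the proposition as literally stated can fail. Hausdorff convergence of $C^{-n}\cdot T$ to $L$ does not control pointwise membership, and in degenerate configurations one has $\nu(L\setminus L_\infty)>0$. For example, in $\rel^3$ take $M=\diag(2,R(\theta))$ with $\theta/\pi$ irrational, so that $C=\diag(2,1,1)$ and $D=\diag(1,R(\theta))$; take $s=(0,1,0)$ and $T=\{x\colon 1<x_1<2\}$. Then $\Dcal\cdot s$ is a circle inside the plane $\{x_1=0\}$, the limit shape of $C^{-n}\cdot T=\{x\colon 2^{-n}<x_1<2^{1-n}\}$ is $L=\{x_1=0\}$, so $\nu(L\cap(\Dcal\cdot s))=1$, yet $L_\infty=\varnothing$ and indeed $\mu=0$ since $M^k s$ always has first coordinate $0$. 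So $\mu$ equals $\nu(L_\infty\cap(\Dcal\cdot s))$, not $\nu(L\cap(\Dcal\cdot s))$, and your suggestion to read $L$ in the statement as the essential/persistent limit $L_\infty$ is exactly the needed correction; your attempted inclusion $\operatorname{int}(L)\subseteq L_\infty$ is also false in such examples (the interior of a hyperplane in $\rel^3$ is empty, but intersected with $\Dcal\cdot s$ the "interior relative to $X$" is all of $X$), so there is no way to rescue the $L$-formulation.

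One small technical point in the part you do prove: as written, $\Bcal_N=\bigcup_{n\ge N}(\hat T_n\triangle\hat L_\infty)$ is a countable union indexed by integers, hence not a priori definable in $\rexp$, so appealing to definable boundaries of $\Bcal_N$ for Jordan measurability is not quite licensed. Replace it by the definable superset $\Bcal_N^*=\{z\colon \exists t\ge N.\; z\in\hat T_t\triangle\hat L_\infty\}$ with $t$ ranging over reals; by o-minimality the fibre $\{t\colon z\in\hat T_t\}$ is a finite union of intervals, hence eventually monotone, so $\Bcal_N^*\searrow\varnothing$ just as before, and $\Bcal_N^*$ is definable so its boundary is null. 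With that patch, and with $L_\infty$ in place of $L$, your argument is a complete and correct proof of what the paper's sketch is aiming at.
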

The aforementioned probability space is derived from the Haar measure on $\Dcal$.
We sketch the proof of \Cref{thm:edon}.
Write $X = \Dcal \cdot s$.
Because $X$ is bounded, by \Cref{thm:limshape-properties}, $(C^{-n} \cdot T) \cap X$ converges uniformly (with respect to the Hausdorff metric) to $L \cap X$ as $n \to \infty$.
Thus when measuring the frequency $\mu$, we can pass from the sequence  $\seq{C^{-n} \cdot T}$ to the limit shape $L \cap X$.
We thus need to understand the frequency with which $(D^ns)_{n\in\nat}$ hits $L \cap X$, which, by Weyl's equidistribution theorem, can expressed as an integral over $X$.
To check whether $\mu > 0$, we simply check, using tools from semialgebraic geometry, whether $L \cap X$ has full dimension in~$X$.

It is now understood that o-minimality gives us strong ergodic properties of linear dynamical systems, despite the fact that they are typically not compact.
This idea is pursued further in the recent work \cite{nahs}, where it shown how to compute an integral representation for the \emph{mean payoff} $\mu = \frac 1 n \sum_{k=0}^{n-1} w(M^n)$, where $M \in (\ralg)^{d\times d}$ and $w \in \rel^{d\times d} \to \rel$ is a \emph{weight function} definable in $\rexp$.
Let $(C,D)$ be a decomposition of $M$.
The idea is to write
\[
\mu = \frac 1 n \sum_{k=0}^{n-1} f(C^nD^n) = \frac 1 n \sum_{k=0}^{n-1} f_n(D^n)
\]
where $f_n \colon \Dcal \to \rel$ with $f_n (X) = C^n \cdot X$.
Then the sequence of functions $(f_n)_{n\in\nat}$, when viewed as a sequence of subsets of $\Dcal \times \rel$, converges pointwise to a limit function $f \colon \Dcal \to \rel$ (\Cref{thm:omin-effective-2}).
The next step is to argue that, in fact, $\mu = \lim_{n\to \infty} \frac 1 n \sum_{k=0}^{n-1} f(D^n)$, after which we can compute the integral representation of $\mu$ using the fact that $(D^n)_{n\in\nat}$ is ergodic by Weyl's equidistribution theorem.
We can also approximate $\mu$ to arbitrary precision using the aforementioned integral representation.
This, however, requires assuming Schanuel's conjecture since in the most general setting all we know about $f$ is that it is definable in~$\rexp$.


\bibliography{refs}

\end{document}